\newcommand{\mytildesymbol}{\raisebox{0.60ex}{\texttildelow}}
\newcommand{\verbtilde}[1]{\mytildesymbol#1}
\not \isundefined{\disputationsdatum} 
\not \isundefined{\disputationslokal}}   
  \or \boolean{detectedSTOC} \or \boolean{detectedFOCS}
  \or \boolean{detectedSIAM} \or \boolean{detectedIEEE}
  \or \boolean{detectedPoster}}
\or \boolean{detectedSIAM}
  \or \boolean{detectedSIAM}     \or \boolean{detectedLIPIcs}}
\or \boolean{detectedSIAM}         \or 
\or \boolean{detectedNOW}          \or 
\or \boolean{detectedACM}          \or
\or \boolean{detectedLIPIcs}       \or
\or \boolean{detectedAAAI}         \or
\or \boolean{detectedSigplanconf}  \or
\or \boolean{detectedFOCS}         \or 
\or \boolean{detectedPoster}       \or
\or \boolean{detectedLMCS}         \or
\or \boolean{detectedNOW}          \or
\or \boolean{detectedThesis}       \or
\or \boolean{detectedACM}          \or 
\or \boolean{detectedAAAI}         \or
\or \boolean{detectedIJCAI}        \or 
\or \boolean{detectedSigplanconf}  \or
\not \boolean{detectedSTOC}        \and \not \boolean{detectedFOCS}
\not \boolean{detectedPoster}      \and \not \boolean{detectedElsevier} 
\not \boolean{detectedSIAM}        \and \not \boolean{detectedACM}
\not \boolean{detectedIEEE}        \and \not \boolean{detectedNOW}
\not \boolean{detectedToC}         \and \not \boolean{detectedThesis}
\not \boolean{detectedLIPIcs}      \and \not \boolean{detectedSIAM}
\not \boolean{detectedAAAI}        \and \not \boolean{detectedIJCAI}
\not \boolean{detectedSigplanconf} \and \not \boolean{detectedACMconf}   
\not \boolean{detectedCompCplx} \and \not \boolean{detectedEasyChair}}
\newtheorem{theorem}{Theorem}
\newtheorem{lemma}[theorem]{Lemma}
\newtheorem{proposition}[theorem]{Proposition}
\newtheorem{corollary}[theorem]{Corollary}
\newtheorem{observation}[theorem]{Observation}
\newtheorem{definition}[theorem]{Definition}
\newtheorem{claim}[theorem]{Claim}
\newtheorem{conjecture}{Conjecture}
\newtheorem{openproblem}[conjecture]{Open Problem}
\newcounter{unnumber}
\crefname{hypothesis}{Hypothesis}{Hypotheses}
\crefname{section}{Section}{Sections}
\crefname{claim}{Claim}{Claims}
\renewcommand{\refsec}[1]{\cref{#1}}
\renewcommand{\reffig}[1]{\cref{#1}}
\newtheorem{standardlocalcounter}{Dummy}[chapter]
\newtheorem{standardglobalcounter}{Dummy}
\newtheorem{theorem}[standardlocalcounter]{Theorem}
\newtheorem{lemma}[standardlocalcounter]{Lemma}
\newtheorem{proposition}[standardlocalcounter]{Proposition}
\newtheorem{corollary}[standardlocalcounter]{Corollary}
\newtheorem{observation}[standardlocalcounter]{Observation}
\newtheorem{fact}[standardlocalcounter]{Fact}
\newtheorem{conjecturelocalcounter}[standardlocalcounter]{Conjecture}
\newtheorem{conjectureglobalcounter}[standardglobalcounter]{Conjecture}
\newtheorem{conjecture}[standardglobalcounter]{Conjecture}
\newtheorem{openquestion}[standardglobalcounter]{Open Question}
\newtheorem{openproblem}[standardglobalcounter]{Open Problem}
\newtheorem{problem}{Problem}
\newtheorem{property}[standardlocalcounter]{Property}
\newtheorem{definition}[standardlocalcounter]{Definition}
\newtheorem{claim}[standardlocalcounter]{Claim}
\newtheorem{algorithm}[standardlocalcounter]{Algorithm}
\newtheorem{remark}[standardlocalcounter]{Remark}
\newtheorem{example}[standardlocalcounter]{Example}
\renewenvironment{proof}[1][Proof]{\par\trivlist
   \item[\hskip \labelsep{\itshape {#1}.}]\prooffont}
   {\hspace*{0pt plus1fill}\fboxsep2.5pt\fboxrule.5pt\raise3pt\hbox{\fbox{}}\endtrivlist}
\theoremstyle{plain}    
\newtheorem{theorem}[thm]{Theorem}
\newtheorem{lemma}[thm]{Lemma}
\newtheorem{proposition}[thm]{Proposition}
\newtheorem{corollary}[thm]{Corollary}
\newtheorem{observation}[thm]{Observation}
\newtheorem{conjecture}[thm]{Conjecture}
\newtheorem{problem}[thm]{Problem}
\newtheorem{openquestion}{Open Question}
\newtheorem{openproblem}{Open Problem}
\theoremstyle{definition}
\newtheorem{property}[thm]{Property}
\newtheorem{definition}[thm]{Definition}
\newtheorem{claim}[thm]{Claim}
\newtheorem{remark}[thm]{Remark}
\newtheorem{example}[thm]{Example}
\newtheorem{standardlocalcounter}{Dummy}[section]
\newtheorem{standardglobalcounter}{Dummy}
\theoremstyle{plain}    
\newtheorem{theorem}[standardglobalcounter]{Theorem}
\newtheorem{lemma}[standardglobalcounter]{Lemma}
\newtheorem{proposition}[standardglobalcounter]{Proposition}
\newtheorem{corollary}[standardglobalcounter]{Corollary}
\newtheorem{observation}[standardglobalcounter]{Observation}
\newtheorem{fact}[standardglobalcounter]{Fact}
\newtheorem{conjecture}[standardglobalcounter]{Conjecture}
\newtheorem{openquestion}{Open Question}
\newtheorem{openproblem}{Open Problem}
\newtheorem{problem}{Problem}
\theoremstyle{definition}
\newtheorem{property}[standardglobalcounter]{Property}
\newtheorem{definition}[standardglobalcounter]{Definition}
\newtheorem{claim}[standardglobalcounter]{Claim}
\theoremstyle{remark}
\newtheorem{remark}[standardglobalcounter]{Remark}
\newtheorem{example}[standardglobalcounter]{Example}
\newtheoremstyle{meta}%
  {3pt}%
  {3pt}%
  {\scshape \small }%
  {}%
  {\scshape \small }%
  {:}%
  { }%
  {}%
\theoremstyle{meta}
\newtheorem{meta}{Meta comment}
\newtheoremstyle{questions}%
  {3pt}%
  {3pt}%
  {\sffamily \slshape}%
  {}%
  {\bfseries \sffamily \slshape}%
  {:}%
  { }%
  {}%
\theoremstyle{questions}
\newtheorem{questions}{Open questions}
\spnewtheorem*{proofsketch}{Proof sketch}{\itshape}{\rmfamily}
\spnewtheorem{observation}{Observation}{\bfseries}{\itshape}
\spnewtheorem{fact}{Fact}{\bfseries}{\itshape}
\theoremstyle{acmplain}
\newtheorem{theorem}{Theorem}[section]        
\newtheorem{observation}[theorem]{Observation}
\newtheorem{fact}[theorem]{Fact}
\newtheorem{claim}[theorem]{Claim}
\newtheorem{property}[theorem]{Property}
\newtheorem{subclaim}[theorem]{Subclaim}
\newtheorem{openquestion}{Open Question}
\newtheorem{openproblem}{Open Problem}
\theoremstyle{plain}
\newtheorem{observation}[theorem]{Observation}
\newtheorem{openproblem}[theorem]{Open Problem}
\theoremstyle{definition}
\newtheorem{property}[theorem]{Property}
\renewcommand{\refsec}[1]{\expref{Section}{#1}}
\renewcommand{\reffig}[1]{\expref{Figure}{#1}}
\theoremstyle{plain}    
\newtheorem{fact}[theorem]{Fact}
\newtheorem{observation}[theorem]{Observation}
\newtheorem{standardlocalcounter}{Dummy}[section]
\newtheorem{standardglobalcounter}{Dummy}
\theoremstyle{plain}    
\newtheorem{theorem}[standardlocalcounter]{Theorem}
\newtheorem{lemma}[standardlocalcounter]{Lemma}
\newtheorem{proposition}[standardlocalcounter]{Proposition}
\newtheorem{corollary}[standardlocalcounter]{Corollary}
\newtheorem{observation}[standardlocalcounter]{Observation}
\newtheorem{fact}[standardlocalcounter]{Fact}
\newtheorem{conjecturelocalcounter}[standardlocalcounter]{Conjecture}
\newtheorem{conjectureglobalcounter}[standardglobalcounter]{Conjecture}
\newtheorem{conjecture}[standardglobalcounter]{Conjecture}
\newtheorem{openquestion}[standardglobalcounter]{Open Question}
\newtheorem{openproblem}[standardglobalcounter]{Open Problem}
\newtheorem{problem}[standardglobalcounter]{Problem}
\newtheorem{question}[standardglobalcounter]{Question}
\theoremstyle{definition}
\newtheorem{property}[standardlocalcounter]{Property}
\newtheorem{definition}[standardlocalcounter]{Definition}
\newtheorem{claim}[standardlocalcounter]{Claim}
\newtheorem{subclaim}[standardlocalcounter]{Subclaim}
\newtheorem{algorithm}[standardlocalcounter]{Algorithm}
\theoremstyle{remark}
\newtheorem{remark}[standardlocalcounter]{Remark}
\newtheorem{example}[standardlocalcounter]{Example}
                          \or \boolean{detectedElsevier}
                          \or \boolean{detectedEasyChair}}
\newtheorem{standardlocalcounter}{Dummy}[section]
\newtheorem{standardglobalcounter}{Dummy}
\theoremstyle{plain}    
\newtheorem{theorem}[standardlocalcounter]{Theorem}
\newtheorem{lemma}[standardlocalcounter]{Lemma}
\newtheorem{proposition}[standardlocalcounter]{Proposition}
\newtheorem{corollary}[standardlocalcounter]{Corollary}
\newtheorem{observation}[standardlocalcounter]{Observation}
\newtheorem{conjecturelocalcounter}[standardlocalcounter]{Conjecture}
\newtheorem{conjectureglobalcounter}[standardglobalcounter]{Conjecture}
\newtheorem{conjecture}[standardglobalcounter]{Conjecture}
\newtheorem{openquestion}[standardglobalcounter]{Open Question}
\newtheorem{openproblem}[standardglobalcounter]{Open Problem}
\newtheorem{problem}[standardglobalcounter]{Problem}
\theoremstyle{definition}
\newtheorem{property}[standardlocalcounter]{Property}
\newtheorem{definition}[standardlocalcounter]{Definition}
\newtheorem{claim}[standardlocalcounter]{Claim}
\theoremstyle{remark}
\newtheorem{remark}[standardlocalcounter]{Remark}
\newtheorem{example}[standardlocalcounter]{Example}
\newtheorem{standardlocalcounter}{Dummy}[chapter]
\newtheorem{standardglobalcounter}{Dummy}
\theoremstyle{plain}    
\newtheorem{theorem}[standardlocalcounter]{Theorem}
\newtheorem{proposition}[standardlocalcounter]{Proposition}
\theoremstyle{definition}
\theoremstyle{remark}
\newtheoremstyle{meta}%
  {3pt}%
  {3pt}%
  {\scshape \small }%
  {}%
  {\scshape \small }%
  {:}%
  { }%
  {}%
\theoremstyle{meta}
\newtheoremstyle{questions}%
  {3pt}%
  {3pt}%
  {\sffamily \slshape}%
  {}%
  {\bfseries \sffamily \slshape}%
  {:}%
  { }%
  {}%
\theoremstyle{questions}
\or \boolean{detectedThesis} \or 
\or \boolean{detectedToC}    \or 
\or \boolean{detectedAAAI}   \or
\or \boolean{detectedSIAM}}
\def\SetTime{\hours=\time
\global\divide\hours by 60
\minutes=\hours
\multiply\minutes by 60
\advance\minutes by-\time
\global\multiply\minutes by-1 }
\def\now{\number\hours:\ifnum\minutes<10 0\fi\number\minutes}
\newcommand{\s}[1]{#1}
\newcommand{\olnot}[1]{\mkern 1.5mu\overline{\mkern-1.5mu#1\mkern-0.5mu}\mkern 0.5mu}
\newcommand{\restrict}[2]{{#1\!\!\upharpoonright_{#2}}}
\newcommand{\rup}[2]{\text{RUP}(#1, #2)}
\renewcommand{\rup}[2]{\mathit{RUP}(#1, #2)}
\newcommand{\dom}[0]{\textit{dom}}
\newcommand{\introduceterm}[1]{\emph{#1}}
\newcommand{\redbasedstrengthening}{redundance-based strengthening\xspace}
\newcommand{\REDBASEDSTRENGTHENING}{Redundance-Based Strengthening\xspace}
\newcommand{\mapredundancy}{substitution redundancy\xspace}
\newcommand{\Mapredundancy}{Substitution redundancy\xspace}
\newcommand{\mapredundant}{substitution redundant\xspace}
\newcommand{\litell}{\ell}
\newcommand{\varx}{x}
\newcommand{\vary}{y}
\newcommand{\constrc}{C}
\newcommand{\constrd}{D}
\newcommand{\coeffa}{a}
\newcommand{\dega}{A}
\newcommand{\formf}{F}
\newcommand{\assmntrho}{\rho}
\newcommand{\assmntalpha}{\alpha}
\newcommand{\assmntbeta}{\beta}
\newcommand{\mapassmnt}{\omega}
\newcommand{\boolval}{b}
\newcommand{\numVars}{n}
\newcommand{\sumoneton}[1]{\sum_{#1=1}^{\numVars}}
\newcommand{\negc}[1]{\neg#1}
\newcommand{\Land}{\bigwedge}
\newcommand{\sumnodisplay}{{\textstyle \sum}}
\newcommand{\ceiling}[1]{\lceil #1 \rceil}
\newcommand{\floor}[1]{\lfloor #1 \rfloor}
\newcommand{\Floor}[1]{\bigl \lfloor #1 \bigr \rfloor}
\newcommand{\eqperiod}{\enspace .}
\newcommand{\eqcomma}{\enspace ,}
\newcommand{\mysubsection}[1]{\subsection{#1}}
\newcommand{\proofsystemformat}[1]{\textit{#1}}
\newcommand{\toolformat}[1]{\textit{#1}}
\newcommand{\rat}{\proofsystemformat{RAT}\xspace}
\newcommand{\drat}{\proofsystemformat{DRAT}\xspace}
\newcommand{\lrat}{\proofsystemformat{LRAT}\xspace}
\newcommand{\frat}{\proofsystemformat{FRAT}\xspace}
\newcommand{\grit}{\proofsystemformat{GRIT}\xspace}
\newcommand{\pbp}{\proofsystemformat{PBP}\xspace}
\newcommand{\tracecheck}{\proofsystemformat{TraceCheck}\xspace}
\newcommand{\rupname}{\proofsystemformat{RUP}\xspace}
\newcommand{\veripb}{\toolformat{VeriPB}\xspace}
\newcommand{\minisat}{\toolformat{MiniSat}\xspace}
\newcommand{\satforj}{\toolformat{Sat4j}\xspace}
\newcommand{\roundingsat}{\toolformat{RoundingSat}\xspace}
\newcommand{\kissat}{\toolformat{Kissat}\xspace}
\newcommand{\prtodrat}{\toolformat{PR2DRAT}\xspace}
\newcommand{\cryptominisat}{\toolformat{CryptoMiniSat}\xspace}
\newcommand{\slime}{\toolformat{SLIME}\xspace}
\newcommand{\cnfgen}{\toolformat{CNFgen}\xspace}
\newcommand{\drattrim}{\toolformat{DRAT-trim}\xspace}
\newcommand{\refsec}[1]{Section~\ref{#1}}
\newcommand{\reffig}[1]{Figure~\ref{#1}}
\newcommand{\refeq}[1]{\eqref{#1}}}
\renewcommand{\refeq}[1]{\eqref{#1}}}
\definecolor{mGreen}{rgb}{0,0.6,0}
\definecolor{mGray}{rgb}{0.5,0.5,0.5}
\definecolor{mPurple}{rgb}{0.58,0,0.82}
\definecolor{backgroundColour}{rgb}{0.95,0.95,0.92}
\newcommand{\Nplus}{\mathbb{N}_+}
\newcommand{\N}{\mathbb{N}_0}
\renewcommand{\Nplus}{\mathbb{N}^+}
\renewcommand{\N}{\mathbb{N}}
\providecommand{\coeffa}{a}
\providecommand{\coeffb}{b}
\providecommand{\consta}{A}
\providecommand{\constb}{B}
\providecommand{\litell}{\ell}
\providecommand{\prooflabel}[1]{\scriptsize{\textsf{#1}}}
\renewcommand{\prooflabel}[1]{\footnotesize{\textsf{#1}}}
\providecommand{\rightprooflabel}[1]{\ \prooflabel{#1}}
\providecommand{\pbconstra}{\textstyle \sum_i \coeffa_i \litell_i \geq \consta}
\providecommand{\pbconstrb}{\textstyle \sum_i \coeffb_i \litell_i \geq \constb}
\providecommand{\pbconstrlincomb}[7]%
    {\textstyle \sum_{#1}
      ({#6} {#2}_{#1} + {#7}{#4}_{#1}) \litell_{#1}
      \geq
      {#6} {#3} + {#7}{#5}}
\renewcommand{\set}[1]{\{ #1 \}}
\renewcommand{\Set}[1]{\bigl\{ #1 \bigr\}}
\newcommand{\setdescr}[3][\mid]{\set{ #2 #1 #3 }}
\newcommand{\Setdescr}[3][|]%
     {\ifthenelse{\equal{#1}{;}}%
     {\Set{ #2 \,;\, #3 }}
     {\ifthenelse{\equal{#1}{:}}%
     {\Set{ #2 \,:\, #3 }}
     {\twincommandJN{\bigl\{}{#2\,}{\bigl#1}{\bigr}{\,#3}{\bigr\}}}}}
\newcommand{\SETDESCR}[3][|]%
     {\twincommandJN{\left\{}{#2\,}{\left#1}{\right}{\,#3}{\right\}}}
\newcommand{\setsize}[1]{\lvert#1\rvert}
\newcommand{\acceptword}{pass\xspace}
\renewcommand{\acceptword}{accept\xspace}
\newcommand{\rejectword}{fail\xspace}
\renewcommand{\rejectword}{reject\xspace}
\newcommand{\commentbracketsmath}[1]{[ \, #1 \, ]}
\newcommand{\commentbrackettext}[1]{\commentbracketsmath{\text{#1}}}
\renewcommand{\olnot}[1]{\overline{#1}}
\newcommand{\varstrue}[1]{\mathcal{T}(#1)}
\newcommand{\varsfalse}[1]{\mathcal{F}(#1)}
\newcommand{\yprimesigned}{y'(b)}
\newcommand{\clnotass}[1]{C_{\lnot{#1}}}
\newcommand{\clnotassext}[3]{\clnotass{(#1 \, \cup \, \set{#2 \mapsto #3 })}}
\newcommand{\clnotassrhoy}[1][\rho]{\clnotassext{#1}{\yprimesigned}{0}}
\newcommand{\clnotassextxy}[3]%
   {\clnotass{(#1 \cup \set{#2 \mapsto #3, \, \yprimesigned \mapsto 0 })}}
\newcommand{\reifequiv}{\Leftrightarrow}
\providecommand{\pclit}{\mathit{lit}}
\renewcommand{\pclit}{\litell}
\providecommand{\pcreason}{\mathit{reason}}
\renewcommand{\pcreason}{\constrc_{\mathrm{reason}}}
\providecommand{\pclearned}{\constrc_{\mathrm{learned}}}
\providecommand{\pcformatvalue}[1]{\mathtt{#1}}
\providecommand{\pcnull}{\pcformatvalue{NULL}}
\providecommand{\pcsat}{\pcformatvalue{SATISFIABLE}}
\providecommand{\pcunsat}{\pcformatvalue{UNSATISFIABLE}}
\providecommand{\parityreasoningformat}[1]{\textbf{\textit{#1}}}
\providecommand{\parityreasoningformatdescription}{boldface italicized\xspace}
\newcounter{authorcount}
\newcommand{\newauthor}[3]{%
\newcounter{#1comment}
\expandafter\newcommand\csname #1comment\endcsname[1]{%
\ifdefined\DONOTINSERTCOMMENTS\relax\else%
\par\noindent
{
\scshape \footnotesize #2's comment
\stepcounter{#1comment}\csname the#1comment\endcsname}:
{\sffamily \itshape
\textcolor{#3}{##1}\par}
\fi}
\stepcounter{authorcount}
\expandafter\edef\csname #1ordinal\endcsname{\theauthorcount}
\expandafter\newcommand\csname theauthor#1\endcsname{%
the \ordinaltoname{\csname #1ordinal\endcsname} author\xspace}
\expandafter\newcommand\csname Theauthor#1\endcsname{%
The \ordinaltoname{\csname #1ordinal\endcsname} author\xspace}
}
\definecolor{airforceblue}{rgb}{0.36, 0.54, 0.66}
\definecolor{amethyst}{rgb}{0.6, 0.4, 0.8}
\definecolor{asparagus}{rgb}{0.53, 0.66, 0.42}
\definecolor{brass}{rgb}{0.71, 0.65, 0.26}
\definecolor{brown}{rgb}{0.59, 0.29, 0.0}
\definecolor{darkolivegreen}{rgb}{0.33, 0.42, 0.18}
\definecolor{darkorange}{rgb}{1.0, 0.55, 0.0}
\definecolor{darkcyan}{rgb}{0, 0.5, 0.5}
  \numberwithin{equation}{section}
\newcommand{\jnshortcite}[1]{\cite{#1}}
\newcommand{\jnciteA}[1]{\cite{#1}}
\newcommand{\jnciteNameRef}[2]{#2~\cite{#1}}
\newcommand{\jnciteinorby}[1]{in~\cite{#1}\xspace}
\newcommand{\jncitenameurl}[3]{#1~\cite{#2}\xspace}
\newcommand{\jncitenameurlpunctuation}[4]{#1~\cite{#2}#4\xspace}
\begin{document}

\pdfinfo{
/Title (Certifying Parity Reasoning Efficiently Using Pseudo-Boolean Proofs)
/Author (Stephan Gocht, Jakob Nordstrom)
/TemplateVersion (2021.1)
} %

\title{Certifying Parity Reasoning Efficiently \\ 
  Using Pseudo-Boolean Proofs%
  \thanks{This is the full-length version of the 
    conference paper~\cite{GN21CertifyingParity}
    presented at \emph{AAAI~'21}.}
}

\author {
      Stephan Gocht,\textsuperscript{\rm 1,2}
      Jakob Nordstr\"om,\textsuperscript{\rm 2,1} \\
      \textsuperscript{\rm 1} Lund University, Lund, Sweden \\
      \textsuperscript{\rm 2} University of Copenhagen, Copenhagen, Denmark \\
      stephan.gocht@cs.lth.se, jn@di.ku.dk
}
\date{\today}

\maketitle

\ifthenelse{\boolean{publisherversion}}
{}
{
  \thispagestyle{empty}

  \pagestyle{fancy}
  \fancyhead{}
  \fancyfoot{}
  \renewcommand{\headrulewidth}{0pt}
  \renewcommand{\footrulewidth}{0pt}

  \fancyhead[CE]{\slshape
    CERTIFYING PARITY REASONING EFFICIENTLY USING PSEUDO-BOOLEAN PROOFS}
  \fancyhead[CO]{\slshape \nouppercase{\leftmark}}
  \fancyfoot[C]{\thepage}

  \setlength{\headheight}{13.6pt}
}

\begin{abstract}
  The dramatic improvements in combinatorial optimization algorithms
  over the last decades have had a major impact in artificial
  intelligence, operations research, and beyond, but the output of
  current state-of-the-art solvers is often hard to verify and is
  sometimes wrong. For Boolean satisfiability (SAT) solvers proof
  logging has been introduced as a way to certify correctness, but the
  methods used seem hard to generalize to stronger paradigms. What is
  more, even for enhanced SAT techniques such as parity (XOR)
  reasoning, cardinality detection, and symmetry handling, it has
  remained beyond reach to design practically efficient proofs in the
  standard \drat format. In this work, we show how to instead use
  pseudo-Boolean inequalities with extension variables to concisely
  justify XOR reasoning. Our experimental evaluation of a SAT solver
  integration shows a dramatic decrease in proof logging and
  verification time compared to existing \drat methods. Since our
  method is a strict generalization of \drat, and readily lends itself
  to expressing also 0-1 programming and even constraint programming
  problems, we hope this work points the way towards a unified
  approach for 
  efficient machine-verifiable proofs for a rich class of
  combinatorial optimization paradigms.
\end{abstract}

\section{Introduction}
\label{sec:intro}

Since around the turn of the millennium, combinatorial optimization
has been successfully applied to solve an ever increasing range of
problems in
e.g., resource allocation, scheduling, logistics, and disaster 
management~\cite{HandbookCombOpt13},
and more recent applications in
biology, chemistry, and 
medicine
include, e.g.,  protein analysis and 
design~\jnshortcite{AABDGKOPST14ComputationalProteinDesign,MWB08CPSPtools}
and 
planning for
kidney transplants~\jnshortcite{MO12PairedAltruisticKidney,BKMPetal21Kidney}.
Yet other examples are government auctions generating billions of
dollars in revenue~\cite{LMS17Economics},
as well as allocation of education and work 
opportunities~\cite{Manlove16HospitalsResidents,MMT17AlmostStable}
and matching of
adoptive families with children~\jnshortcite{DGGKMP19MathematicalModels}.

As more and more such problems are dealt with using combinatorial
optimization solvers, an urgent question is whether we can trust that
the solutions computed by such algorithms are correct and complete.
The answer, unfortunately, is currently a clear ``no'':
State-of-the-art 
solvers
sometimes return ``solutions'' that do not satisfy the constraints
or erroneously claim optimality
of solutions~%
\jnshortcite{CKSW13Hybrid,AGJMN18Metamorphic,GSD19SolverCheck}.
This can be fatal for applications such as, e.g., chip design,
compiler optimization, and combinatorial auctions, where correctness
is absolutely crucial, not to speak about when human lives depend on
finding the best solutions.

Conventional software testing 
has
made little progress in
addressing this problem, and formal verification techniques cannot
handle the level of complexity of modern solvers.
Instead, the most successful approach to date has been that of
\emph{proof logging} in the Boolean
satisfiability (SAT) community,
where solvers are required to be
\emph{certifying}~\cite{MMNS11CertifyingAlgorithms}
in the sense that they output not only 
a result but also a simple, machine-verifiable proof that this result is
correct.

This does not certify the correctness of the solver itself, but it
does mean that if it ever produces an incorrect answer
(even if due to hardware errors), 
then this can be detected.  
Furthermore, such proofs can in principle be stored and audited later
by a third party using independently developed software.
A number of different proof logging formats such as
\rupname~\cite{GN03Verification},
\tracecheck~\cite{Biere06TraceCheckWithDate},
\drat~\cite{HHW13Trimming,HHW13Verifying,WHH14DRAT}, 
\grit~\cite{CMS17GRIT},
and
\lrat~\jnshortcite{CHHKS17LRAT}
have been developed, with
\drat now established as the standard in the 
\jncitenameurlpunctuation
{SAT competitions}
{SATcompetition}
{http://www.satcompetition.org}
{.}

A quite natural, and highly desirable, goal would be to extend these
proof logging techniques to stronger
combinatorial optimization     %
paradigms such as pseudo-Boolean (PB) optimization, MaxSAT solving,
mixed integer
linear   %
programming (MIP),
and 
constraint programming (CP),
but such attempts have had limited success.
Either the proofs require trusting in powerful and complicated rules
(as in, e.g.,~\cite{VS10ProofProducing}), defeating simplicity and
verifiability, or they have to justify such rules by long explanations,
leading to an exponential slow-down (see~\cite{GS19Certifying}).
In fact, even for SAT solvers a long-standing problem is that 
more advanced techniques  
for detecting and
reasoning with parity
constraints (a.k.a.\ exclusive or, or XOR, constraints),  
cardinality constraints, and symmetries
have remained out of reach for efficient proof logging.
Although in theory 
it might seem like
there should be no problems---the \drat proof system
is extremely powerful, and can in principle justify
such reasoning and much more with at most a polynomial amount of
work~\cite{SB06ExtendedResolution,HHW15SymmetryBreaking,PR16DRATforXOR}---in
practice the overhead seems completely prohibitive.
Thus, a key challenge on the road to efficient proof logging for 
more general combinatorial 
optimization 
solvers would seem to be to
design a method that can capture the full range of techniques used in
modern SAT solvers.

\mysubsection{Our Contribution}
In this work, we present a new, efficient proof logging method for
parity reasoning that is---perhaps somewhat surprisingly---based on
pseudo-Boolean reasoning with \mbox{$0$-$1$} 
integer
linear inequalities.
Though such inequalities might seem ill-suited to representing XOR
constraints, this can be done elegantly by introducing auxiliary so-called 
\emph{extension  variables}~\cite{DGP04GeneralizingBooleanSatisfiabilityI}. 
Using this observation, we strengthen the 
\jncitenameurl{\veripb tool}{VeriPB}{https://gitlab.com/MIAOresearch/VeriPB}
recently introduced \jnciteinorby{EGMN20Justifying}, which can be viewed as a
generalization to pseudo-Boolean proofs of 
\rupname~\cite{GN03Verification}.
Borrowing inspiration
from~\jnciteA{HKB17ShortProofs,BT19DRAT}, we develop stronger, but still
efficient, rules that can handle also extension variables, making
\veripb, in effect, into a strict generalization of \drat.

We have implemented our method 
for representing XOR constraints and performing Gaussian elimination
on such constraints
in a library with a
simple, clean interface for SAT solvers. As a proof of concept, we
have also integrated it in \minisat~\cite{ES03MiniSat}, which still
serves as the foundation of 
many
state-of-the-art SAT solvers.
Our library also provides \drat proof logging for XORs  
as described \jnciteinorby{PR16DRATforXOR}, but with some optimizations,
to allow for a comparative evaluation. Our experiments show
that the overhead for proof logging, 
the size of the produced proofs, and the time for verification 
all go down by orders of magnitude 
for our
pseudo-Boolean
method 
compared to
\drat.
Furthermore, the fact that PB reasoning forms the basis for solvers like
\satforj~\cite{LP10Sat4j}
and
\roundingsat~\cite{EN18RoundingSat}
means that our library can also empower such pseudo-Boolean
solvers 
to reason with parities.

Since cardinality constraints are just a special case of PB
constraints, it is clear that our method should suffice to justify
the cardinality reasoning used in SAT solvers. The method presented in
this paper is not sufficient for efficient proof logging of general
symmetry breaking,
but at least we can perform as efficiently for symmetry breaking as any
approach using \drat, since our proof system subsumes \drat.
More excitingly, the original \veripb tool has already been shown to be
capable of efficiently justifying a number of constraint programming
techniques~\jnshortcite{EGMN20Justifying,GMN20SubgraphIso,GMMNPT20CertifyingSolvers}.
Our optimistic interpretation
is that pseudo-Boolean reasoning
with extension variables shows great potential as a unified method
of proof logging for SAT solving, pseudo-Boolean optimization,
MaxSAT solving,
constraint programming, and maybe even mixed integer programming.

\subsection{Subsequent Developments}

The last couple of years have witnessed quite significant developments in
proof logging.
Since the conference version of this paper appeared,
our pseudo-Boolean proof logging method has been extended further to
deal with fully general symmetry breaking in SAT
solving~\jnshortcite{BGMN22Dominance}, 
and also to support pseudo-Boolean solving using SAT
solvers~\jnshortcite{GMNO22CertifiedCNFencodingPB}.
Furthermore, there have been promising preliminary results on
providing proof logging for  
MaxSAT solvers~\cite{VWB22QMaxSATpb}
and constraint programming solvers~\cite{GMN22AuditableCP}.

The \drat proof logging method has recently been extended to
\frat~\cite{BCH21Flexible}, which allows to integrate different forms
of reasoning.
Proof logging using binary decision diagrams
(BDDs)~\cite{Bryant22TBUDDY},
generating proofs in all of the
\drat, \lrat, and \frat formats,
has also been developed for pseudo-Boolean
reasoning~\cite{BBH22ClausalProofs} 
and parity reasoning~\cite{SB22Combining}.
Further evaluation will be needed to decide whether such clausal proof
logging methods can be truly competitive with pseudo-Boolean proof
logging.

\subsection{Organization of This Paper}
After some brief background in
\refsec{sec:prelims}, we introduce the key technical 
notions needed
for our new proof logging rules
in
\refsec{sec:mapping-redundancy}
and show how they can be used to justify parity reasoning in
\refsec{sec:proof-logging} with a worked out example in \refsec{sec:example}.
We present an experimental evaluation in
\refsec{sec:evaluation}
and provide some concluding remarks in
\refsec{sec:conclusion}.

\section{Preliminaries}
\label{sec:prelims}

Let us start by quickly reviewing the required material on
pseudo-Boolean reasoning, referring the reader to, e.g.,
\jnciteA{BN21ProofCplxSAT} for more context.
A few pieces of standard notation are that we write
$
\N = \set{0, 1, 2, \ldots}
$
to denote non-negative integers and
$
\Nplus = \N \setminus \set{0}
$ 
to denote positive integers.
For $\numVars \in \Nplus$,
we write $[\numVars] = \set{1, 2, \ldots, \numVars}$
to denote the set consisting of the first~$\numVars$ positive integers.

A \emph{literal}~$\litell$ over a Boolean variable~$\varx$
is $\varx$ itself or its negation
\mbox{$\olnot{\varx} = 1 - \varx$},
where variables take values~$0$ (false) or $1$~(true).
For notational convenience, we define
$\olnot{\olnot{\varx}} = \varx$.
A \emph{pseudo-Boolean (PB) constraint}~$\constrc$
over
literals $\litell_1, \dots, \litell_\numVars$
is a \mbox{$0$-$1$} linear inequality
\begin{equation}
  \label{eq:normalized}
  \sumoneton{i} \coeffa_i \litell_i \geq \dega
  \eqcomma
\end{equation}
which without loss of generality we always assume to be in
\emph{normalized form}; i.e.,
all literals $\litell_i$ are over distinct variables and
the coefficients~$\coeffa_i$ and  the
\emph{degree (of falsity)}~$\dega$
are non-negative integers. 
Conversion to normalized form can be performed efficiently by using equalities
\mbox{$\olnot{\varx} = 1 - \varx$} to rewrite the left-hand side of
any inequality as a positive linear combination of literals, and so in
what follows we will consider any pseudo-Boolean constraint and its
normalized form to be one and the same constraint.
We will use equality
\begin{subequations}
\begin{align}
  \label{eq:equality}
  \sumoneton{i} \coeffa_i \litell_i &= \dega \\
\intertext{as syntactic sugar for the pair of inequalities}
  \label{eq:equality-ineq-1}
  \sumoneton{i} \coeffa_i \litell_i &\geq \dega \\
  \label{eq:equality-ineq-2}
  \sumoneton{i} -\coeffa_i \litell_i &\geq -\dega
\end{align}
\end{subequations}
(but rewritten in normalized form)
and the \emph{negation}~$\negc{\constrc}$ of~\refeq{eq:normalized} is
(the normalized form of)
\begin{equation}
  \label{eq:negation}
  \sumoneton{i} -\coeffa_i \litell_i \geq -\dega + 1
  \eqperiod
\end{equation}
A \emph{pseudo-Boolean formula} is a
conjunction
$\formf = \Land_{j=1}^{m} \constrc_j$
of pseudo-Boolean constraints.
Note that a 
\emph{clause}
$\litell_1 \lor \cdots \lor \litell_k$
is equivalent to the 
constraint
\mbox{$\litell_1 + \cdots + \litell_k \geq 1$},
so formulas in
\emph{conjunctive normal form (CNF)}
are special cases of 
pseudo-Boolean formulas.

A \emph{(partial) assignment} is a (partial) function from variables
to $\set{0,1}$ and a substitution is a (partial) function from
variables to literals or $\set{0,1}$. For an assignment or
substitution~$\assmntrho{}$ 
we will use the convention $\assmntrho(\varx{}) = \varx{}$ 
for $\varx$  not in the domain of~$\assmntrho$, 
denoted $\varx{} \not \in \dom(\assmntrho{})$, and define $\assmntrho(\olnot{\varx})= 1 -
\assmntrho(\varx)$. We also write $\varx \mapsto \boolval$ instead of
$\assmntrho(\varx) = \boolval$, where $\boolval$ 
denotes
$0$, $1$, or a literal, when $\assmntrho$ is clear from context or
is immaterial.
Applying $\assmntrho$ to a 
pseudo-Boolean %
constraint~$\constrc$ as in~\refeq{eq:normalized},
denoted $\restrict{\constrc}{\assmntrho}$, yields the
constraint obtained by
substituting values for all assigned variables,
shifting constants to the right-hand side, and adjusting the
degree appropriately, i.e.,
\begin{equation}
  \label{eq:restricted-constraint}
  \restrict{\constrc}{\assmntrho} \, =\, 
  \sumnodisplay_{i} \coeffa_i \assmntrho(\litell_i) 
  \geq
  \dega
\end{equation}
with appropriate normalization, and for a formula~$\formf$ we define
$
\restrict{\formf}{\assmntrho}
=
\Land_j  \restrict{\constrc_j}{\assmntrho}
$.
The
normalized  %
constraint~$\constrc$
is \emph{satisfied} by~$\assmntrho$
if
$\sum_{\assmntrho(\litell_i) = 1} \coeffa_i \geq \dega$
(or, equivalently, if the restricted
constraint~\refeq{eq:restricted-constraint}
has a non-positive degree and is thus trivial).
A PB formula is satisfied by~$\assmntrho$ if all constraints in it are,
in which case it is \emph{satisfiable}.
If there is no satisfying assignment, the formula is
\emph{unsatisfiable}. Two formulas are
\emph{equisatisfiable} if they are both
satisfiable or both unsatisfiable.

The \introduceterm{cutting planes} proof system as defined
\jnciteinorby{CCT87ComplexityCP} is a method for iteratively deriving new
constraints~$\constrc$ implied by a
pseudo-Boolean  %
formula~$\formf$.
Cutting planes contains
rules for
\introduceterm{literal axioms}
\begin{equation}
  \label{eq:cp-rule-literal}
  \AxiomC{\rule{0pt}{8pt}}
  \UnaryInfC{$\ \litell_i \geq 0 \ $}
  \DisplayProof   \ ,
\end{equation}
and \introduceterm{linear combinations}
\begin{equation}
  \label{eq:cp-rule-lincomb}
    \AxiomC{$\pbconstra    $}
    \AxiomC{$\pbconstrb$}
    \BinaryInfC{$
    \pbconstrlincomb{i}{\coeffa}{\consta}{\coeffb}{\constb}{c_A}{c_B}
    $}
    \DisplayProof
    \ \ \ 
    \commentbracketsmath{c_A, c_B \in \N}
    \eqperiod
\end{equation}
For notational convenience, in this paper we will sometimes use linear
combinations of equalities as in~\eqref{eq:equality}, which is just a
shorthand for taking pairwise linear combinations of inequalities of the
form~\eqref{eq:equality-ineq-1} 
and~\eqref{eq:equality-ineq-2},
respectively.
There is also a rule
for \introduceterm{division}
\begin{equation}
  \label{eq:cp-rule-div}
    \AxiomC{$
    \textstyle \sum_i \coeffa_i \litell_i \geq \consta
    $}
    \UnaryInfC{$
    \textstyle \sum_i
    \ceiling{\coeffa_i / c } \litell_i
    \geq
    \ceiling{\consta / c}
    $}
    \DisplayProof
    \ \ \ 
    \commentbracketsmath{c \in \Nplus}
\end{equation}
(where we note that the soundness of this rule depends on that the
pseudo-Boolean constraint is written in normalized form).
As a toy example, the derivation
\begin{equation}
\label{eq:cp-toy-example}
  \AxiomC{$
    6x + 2y + 3z \geq 5
    $}
  \AxiomC{$
    x + 2y + w \geq 1
    $}
  \RightLabel{\rightprooflabel{%
        Linear combination ($c_A = 1$, $c_B = 2)$%
    }}
  \BinaryInfC{%
    $8x + 6y + 3z + 2w \geq 7$
  }
  \RightLabel{\rightprooflabel{Division ($c = 3$)}}
  \UnaryInfC{$3x + 2y + z + w \geq 3$}
  \DisplayProof
\end{equation}
illustrates how these rules can be combined to obtain new constraints.

The proof system that we use for the proof logging in \veripb
also supports additional rules such as
the \introduceterm{saturation rule}
\begin{equation}
  \label{eq:cp-rule-saturation}
  \AxiomC{$\sum_i \coeffa_i \litell_i \geq \consta$}
  \UnaryInfC{$
    \sum_i \min(\coeffa_i,\consta) \cdot \litell_i \geq \consta
    $}
  \DisplayProof \eqcomma
\end{equation}
which is not part of the cutting planes proof system
defined \jnciteinorby{CCT87ComplexityCP} but was
introduced in the context of pseudo-Boolean
solving
\jnciteinorby{CK05FastPseudoBoolean}. For example, from
the constraint 
$8x + 6y + 3z + 2w \geq 7$ 
in the example above 
it is possible to derive $7x + 6y + 3z + 2w \geq 7$ via
saturation. While this might not be clear from a small example like
this, the division and saturation rules are incomparable in
strength~\cite{GNY19DivisionSaturation}.

For 
pseudo-Boolean   %
formulas $\formf$, $\formf'$ and constraints $\constrc$,
$\constrc'$, we say that
$\formf$ \emph{implies} or \emph{models}~$\constrc$,
\mbox{denoted $\formf \models \constrc$}, if
any  assignment satisfying~$\formf$
must also satisfy~$\constrc$,
and we write $\formf \models \formf'$ if
$\formf \models \constrc'$  for all $\constrc' \in \formf'$.
It is not hard to see that any collection of constraints~$\formf'$
derived (iteratively) from~$\formf$ by
cutting planes are implied in this sense,
and so it holds that
$\formf$
and
$\formf \land \formf'$
are equisatisfiable.
A particularly simple type of implication is when a 
constraint~$\constrc'$ can be derived from some other constraint~$\constrc$ using
only addition of literal axioms as in~\eqref{eq:cp-rule-literal}. 
When this is is the case, we will say that $\constrc'$ is 
\emph{implied syntactically}
by~$\constrc$.

A constraint~$\constrc$ is said to  \emph{unit propagate}
the literal~$\litell$ under~$\assmntrho$
if $\restrict{\constrc}{\assmntrho}$
cannot be satisfied unless
$\litell \mapsto 1$.
During \emph{unit propagation} on $\formf$ under~$\assmntrho$,
we extend~$\assmntrho$ 
iteratively by any propagated literals
$\litell \mapsto 1$ until an assignment~$\assmntrho'$ is reached under
which no constraint
$\constrc \in \formf$
is propagating, or
under which some constraint~$\constrc$ propagates a literal that has
already been assigned to the opposite value.
The latter scenario is referred to as a
\emph{conflict}, since $\assmntrho'$ \emph{violates} the
constraint~$\constrc$ in this case, and $\assmntrho'$ is called a
\emph{conflicting} assignment.

Using the generalization of~\cite{GN03Verification}
\jnciteinorby{EGMN20Justifying},
we say that  $\formf$  implies~$\constrc$ by
\emph{reverse unit propagation (RUP)}, and write
$\rup{\formf}{\constrc}$,
if
$\formf \land \negc{\constrc}$
unit propagates to conflict under the empty assignment.
It is not hard to see that
$\rup{\formf}{\constrc}$
implies
$\formf \models \constrc$,
but the opposite direction is not necessarily true.
Cutting planes as defined above is
not only \emph{sound} in the sense that it can only derived implied
constraints, but it is also \emph{implicationally complete},
which means that if a pseudo-Boolean formula~$\formf$
implies a constraint~$\constrc$, then there is also a cutting planes
derivation of~$\constrc$ from~$\formf$. This holds, in particular, if
$\constrc$ is RUP with respect to~$\formf$. However, it might not
always be obvious how to construct such a derivation, and therefore 
we can add a derivation rule for adding RUP constraints as a
convenient shorthand.
An important special case of completeness is that if a set of
pseudo-Boolean constraints~$\formf$ is unsatisfiable, 
then there exists a cutting planes derivation of the contradiction
$0 \geq 1$ from~$\formf$, which we refer to as a 
\emph{proof of unsatisfiability}, or \emph{refutation}, of~$\formf$.

\section{\REDBASEDSTRENGTHENING{}}
\label{sec:mapping-redundancy}

In order to provide proof logging for parity reasoning, we need the
ability not only to perform cutting planes reasoning, but also to
introduce
\emph{fresh variables} not occurring
in the formula~$\formf$ under consideration. 
In particular, we want to be able to use a fresh variable~$\vary$
to encode the \emph{reification} of a constraint
$\sum_i \coeffa_i \litell_i \geq \dega$,
i.e., that $\vary$ is true if and only if the constraint is satisfied.  
We will use the shorthand
\begin{equation}
  \label{eq:reification-notation}
  \vary \reifequiv
  \sumnodisplay_i \coeffa_i \litell_i \geq \dega
\end{equation}
to denote
the two constraints
\begin{subequations}
  \begin{align}
    \label{eq:reification-encoding-1}
    \dega \olnot{\vary} + 
    \sumnodisplay_i \coeffa_i \litell_i &\geq \dega
    \\
    \label{eq:reification-encoding-2}
    \bigl( -\dega \!+\! 1  \!+\! \sumnodisplay_i \coeffa_i \bigr) \cdot \vary + 
    \sumnodisplay_i \coeffa_i \olnot{\litell}_i &\geq
    -\dega + 1  + \sumnodisplay_i \coeffa_i
  \end{align}
\end{subequations}
enforcing this condition
(which is the case under the the assumption that the constraint 
$\sum_i \coeffa_i \litell_i \geq \dega$
is written in normalized form).
By way of a concrete example, the reification of the constraint
\begin{equation}
  \label{eq:reified-constraint-example}
  \varx_1 + \varx_2 + \varx_3 \geq 2    
\end{equation}
using $\vary$ is encoded as
\begin{subequations}
  \begin{align}
    \label{eq:reification-example-encoding-1}
    2 \olnot{\vary} + 
    \varx_1 + \varx_2 + \varx_3 &\geq 2
    \\
    \label{eq:reification-example-encoding-2}
    2 \vary + 
    \olnot{\varx}_1 + \olnot{\varx}_2 + \olnot{\varx}_3 &\geq 2    
  \end{align}
\end{subequations}
in pseudo-Boolean form.  Note that introducing such constraints
maintains equisatisfiability provided that 
the \emph{reification variable}~$\vary$ 
does not appear in any other constraint, 
since any assignment to the literals~$\litell_i$ will satisfy
either~\refeq{eq:reification-encoding-1}
or~\refeq{eq:reification-encoding-2}, 
which allows us to assign~$y$ so that the other constraint is also satisfied.

More generally, it would be convenient to allow the
``derivation'' of any constraint~$\constrc$ from~$\formf$ such that
$\formf$ and $\formf \land \constrc$ are equisatisfiable---in which
case we say that
$\constrc$ is \emph{redundant with respect to~$\formf$}---regardless
of whether $\formf \models \constrc$ holds or not.
A moment of thought reveals that such a completely generic rule would
be too good to be true---for any unsatisfiable formula~$\formf$ we would then
be able to ``derive'' contradiction (say, $0 \geq 1$) in just one
step, 
and such derivations would be hard to check for correctness.
What we need, therefore, is a sufficient criterion for redundancy of
pseudo-Boolean 
constraints that is simple to verify.
To this end, we generalize the characterization of redundancy
\jnciteinorby{HKB17ShortProofs,BT19DRAT} from CNF formulas to 
pseudo-Boolean 
formulas  as follows.

\begin{proposition}[\Mapredundancy{}]
  \label{prop:redundant}
  A 
  pseudo-Boolean   %
  constraint~$\constrc$   is redundant with respect to 
  the formula~$\formf$ if and only if  there is a
  substitution $\mapassmnt$, called 
  a
  \emph{witness},
  for which
  it holds that
  \begin{equation*}
    \label{eq:redundant}
    \formf \land \negc{\constrc}
    \models
    \restrict{(\formf \land \constrc)}{\mapassmnt}
    \eqperiod
  \end{equation*}
\end{proposition}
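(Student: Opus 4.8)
The plan is to prove the two implications separately. The content of the statement is essentially that redundancy is the same thing as ``$\formf \land \negc{\constrc}$ semantically forces a satisfying assignment of $\formf \land \constrc$,'' with $\mapassmnt$ being the substitution that produces that assignment; the only non-bookkeeping ingredient is a substitution-composition lemma.

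\emph{Sufficiency.} Suppose a witness $\mapassmnt$ exists, i.e.\ $\formf \land \negc{\constrc} \models \restrict{(\formf \land \constrc)}{\mapassmnt}$. Since $\formf \land \constrc \models \formf$ always holds, it suffices to show that satisfiability of $\formf$ implies satisfiability of $\formf \land \constrc$. So let $\assmntalpha$ be a satisfying assignment of $\formf$, which we may take to be total (extend an arbitrary partial one). If $\assmntalpha \models \constrc$ we are done. Otherwise $\assmntalpha \models \negc{\constrc}$, so $\assmntalpha \models \formf \land \negc{\constrc}$ and hence $\assmntalpha \models \restrict{(\formf \land \constrc)}{\mapassmnt}$. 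Now define $\assmntbeta$ by $\assmntbeta(\varx) = \assmntalpha(\mapassmnt(\varx))$ for every variable $\varx$ (using the convention $\mapassmnt(\varx) = \varx$ off the domain of $\mapassmnt$, and extending $\assmntalpha$ to literals and to the constants $0,1$ in the obvious way); this is again a total $0$--$1$ assignment. I would then record as a small lemma that for every pseudo-Boolean constraint $\constrd$, $\assmntbeta \models \constrd$ if and only if $\assmntalpha \models \restrict{\constrd}{\mapassmnt}$, which is immediate from the definition of $\restrict{\constrd}{\mapassmnt}$ in \refeq{eq:restricted-constraint}. Applying it to every constraint of $\formf \land \constrc$ yields $\assmntbeta \models \formf \land \constrc$, so $\formf \land \constrc$ is satisfiable.

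\emph{Necessity.} Suppose $\constrc$ is redundant with respect to $\formf$, that is, $\formf$ and $\formf \land \constrc$ are equisatisfiable. If $\formf \land \negc{\constrc}$ is unsatisfiable, then $\formf \land \negc{\constrc} \models \restrict{(\formf \land \constrc)}{\mapassmnt}$ holds vacuously for any $\mapassmnt$, and we may take $\mapassmnt$ to be the empty substitution. Otherwise $\formf \land \negc{\constrc}$, and in particular $\formf$, is satisfiable, so by redundancy $\formf \land \constrc$ is satisfiable; fix a total satisfying assignment $\assmntalpha^{*}$ of $\formf \land \constrc$ and let $\mapassmnt$ be the constant substitution $\varx \mapsto \assmntalpha^{*}(\varx)$. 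For each constraint $\constrd$ of $\formf \land \constrc$ the restriction $\restrict{\constrd}{\mapassmnt}$ has no remaining variables and, because $\assmntalpha^{*} \models \constrd$, its normalized form is $0 \geq k$ for some $k \leq 0$, hence is satisfied by every assignment. Thus $\restrict{(\formf \land \constrc)}{\mapassmnt}$ is trivially true, so $\formf \land \negc{\constrc} \models \restrict{(\formf \land \constrc)}{\mapassmnt}$ and $\mapassmnt$ is a witness.

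\emph{Expected main obstacle.} The one genuinely technical step is the substitution-composition lemma used in the sufficiency direction: one must verify, literal by literal, that forming $\restrict{\constrd}{\mapassmnt}$ (substitute, move constants to the right-hand side, renormalize) and then evaluating under $\assmntalpha$ gives the same truth value as evaluating $\constrd$ under the composed assignment $\assmntbeta = \assmntalpha \circ \mapassmnt$, including the cases where $\mapassmnt$ flips a literal or replaces it by a constant. Everything else --- trivial constraints of degree $\le 0$, extending partial assignments to total ones, and the vacuous-entailment case --- is routine.
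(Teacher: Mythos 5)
Your proposal is correct and follows essentially the same route as the paper's proof: for sufficiency you build the composed assignment $\assmntbeta = \assmntalpha \circ \mapassmnt$ and use the substitution-composition identity $\restrict{(\formf\land \constrc)}{\assmntbeta} = \restrict{(\restrict{(\formf\land \constrc)}{\mapassmnt})}{\assmntalpha}$, exactly as the paper does, and for necessity you take the witness to be a total satisfying assignment of $\formf \land \constrc$ so that the right-hand side of the implication becomes trivially true. The only cosmetic difference is that your necessity case split is on satisfiability of $\formf \land \negc{\constrc}$ rather than of $\formf$, which changes nothing of substance.
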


\begin{proof}
($\Rightarrow$)
Suppose
$\constrc$ is redundant. If $\formf$ is unsatisfiable,
then for any constraint $\constrc'$ it vacuously holds that 
\mbox{$\formf \models \constrc'$}. 
Hence, any 
substitution $\mapassmnt$
fulfils the condition. If $\formf$ is satisfiable,  then 
$\formf \land \constrc$ must also be satisfiable as $\constrc$ is
redundant by assumption. If we choose $\mapassmnt$ to be a
satisfying assignment for $\formf \land \constrc$, 
the implication in the proposition again vacuously holds since
$\restrict{(\formf \land \constrc{})}{\mapassmnt}$ is fixed to true.

($\Leftarrow$)
Suppose now that $\mapassmnt$ is such that
\mbox{$
\formf \land \negc{\constrc}
\models
\restrict{(\formf \land \constrc)}{\mapassmnt}
$}.
If $\formf$ is unsatisfiable, then every constraint is
redundant and there is nothing to check. Otherwise, let
$\assmntalpha$ be a (total) satisfying assignment for $\formf$. 
If $\assmntalpha$ also satisfies~$\constrc$, then 
clearly
the constraint is redundant. Now consider the case that
$\assmntalpha$ does not satisfy $C$. If so, $\assmntalpha$ must
satisfy $\negc{\constrc}$ and hence, by the assumed implication,
also~$\restrict{(\formf \land \constrc{})}{\mapassmnt}$.
But then the assignment~$\assmntbeta$ defined by
\begin{equation}
  \assmntbeta(\varx) = 
  \begin{cases}
    \assmntalpha(\varx) & \text{if $\varx \not \in \dom(\mapassmnt)$,} \\
    \assmntalpha(\mapassmnt(\varx)) & \text{otherwise,}
  \end{cases}
\end{equation}
satisfies both~$\constrc$
and~$\formf$
(since
\mbox{$
\restrict{(\formf\land \constrc{})}{\assmntbeta}
=
\restrict{(\restrict{(\formf\land \constrc{})}{\mapassmnt})}{\assmntalpha}
$}
by construction),
so 
$\formf \land \constrc$
is satisfiable.
\end{proof}

We remark that this proof does not make use of that we are operating
with a pseudo-Boolean constraint~$\constrc$---we only need that 
the negation~$\negc{\constrc}$ is easy to represent in the same
formalism. Thus, the argument generalizes to other types of constraints
with this property
(such as, for instance, polynomial equations over finite fields when evaluated
on  Boolean inputs~$\set{0,1}^n$, as in the
\emph{polynomial calculus} proof 
system~\cite{CEI96Groebner,ABRW02SpaceComplexity}
formalizing Gröbner basis computations).

\newcommand{\clow}{\constrc_{\ref{eq:reification-example-encoding-2}}}
\newcommand{\chigh}{\constrc_{\ref{eq:reification-example-encoding-1}}}

Let us
return to our example reification of the constraint
in~\refeq{eq:reified-constraint-example} and show how this can be
derived
using \mapredundancy.  Let us write $\chigh$ for the constraint
in~\eqref{eq:reification-example-encoding-1} and $\clow$ 
for~\eqref{eq:reification-example-encoding-2},
where $\vary$ is fresh with respect to the current formula~$\formf$.
To show that $\clow$ is \mapredundant with respect to~$\formf$
we choose 
the
witness
$\mapassmnt = \set{\vary \mapsto 1}$,
which clearly satisfies~$\clow$.
Since $\vary$ does not appear in~$\formf$ we have
$
\restrict{\formf}{\mapassmnt} = \formf
$,
and so the implication
$
\formf \land \negc{ \clow }
\models \restrict{(\formf\land \clow)}{\mapassmnt}
$
vacuously holds. 
Showing that 
$\chigh$ is \mapredundant with respect to $\formf \land \clow$
is a bit more interesting.
For this we choose
$\mapassmnt = \set{\vary \mapsto 0}$,
which satisfies~$\chigh$
and again leaves~$\formf$ unchanged.  
Thus, the only implication for which we need to do some work is
$
\formf \land \clow \land \negc{ \chigh }
\models 
\restrict{\clow}{\mapassmnt}
$. 
The negation of $\chigh$ is
\begin{subequations}
\begin{equation}
  \label{eq:neg-C}
  -2 \olnot{\vary}  
  -\varx_1 - \varx_2 - \varx_3 \geq -1
\eqcomma
\end{equation}
or, converted to normalized form,
\begin{equation}
  \label{eq:neg-C-normalized}
  2 \vary + \olnot{\varx}_1 + \olnot{\varx}_2 + \olnot{\varx}_3 \geq  4
\end{equation}
\end{subequations}
using the rewriting rule
$\litell = 1  - \olnot{\litell}$.
Adding the literal axiom
$\olnot{\vary} \geq 0$
twice to $\negc{\chigh}$, and using rewriting again to cancel
\mbox{$\vary +  \olnot{\vary} = 1$}, 
we obtain
\begin{equation}
\olnot{\varx}_1 + \olnot{\varx}_2 + \olnot{\varx}_3 \geq 2 \eqcomma
\end{equation}
which is 
$\restrict{\clow}{\mapassmnt}$. 
Hence,
$\restrict{\clow}{\mapassmnt}$
can be derived from~$\negc{ \chigh }$
by just adding literal axioms---or, in the terminology introduced in
the preliminaries, $\negc{ \chigh }$~syntactically
implies~$\restrict{\clow}{\mapassmnt}$---and so
it certainly holds that   %
$\formf \land \clow \land \neg \chigh 
\models 
\restrict{\clow}{\mapassmnt}$.
This completes the proof that
$\chigh$ is redundant with respect to~${\formf\land\clow}$.

In our proof system for pseudo-Boolean proof logging, we will include
a \emph{\redbasedstrengthening{}}%
\footnote{In the conference version~\cite{GN21CertifyingParity} of
  this paper, this rule was called \emph{substitution redundancy}.
  However, since then an additional rule using witness substitutions
  has been introduced \jnciteinorby{BGMN22Dominance}, and we follow the
  terminology in this later paper to adhere to a consistent naming scheme.
} 
rule that allows to derive
constraints that satisfy the condition in
Proposition~\ref{prop:redundant}.
In order to do so, we need to discuss how the implication in this
\mapredundancy condition is to be verified.
Whenever this rule is used, the user needs to explicitly specify a 
witness~$\mapassmnt$, but this is not enough.
Arbitrary implication checks are as hard to verify as determining
satisfiability of a formula, and hence 
some kind of efficiently verifiable certificate
that the implication indeed holds 
is necessary 
to be able to validate the proof.
One way of providing
such %
a certificate
is to exhibit a cutting planes derivation establishing the validity of
the implication, as in the example just presented. A more convenient
alternative from a proof logging point of view is to follow the lead
of \drat and 
allow adding constraints without proof if 
the
implication can be verified automatically, e.g., using 
reverse
unit propagation. We
describe our pseudo-Boolean version of this 
automatic verification
method in
Algorithm~\ref{alg:PBRAT}. It is easy to see that the condition in
Proposition~\ref{prop:redundant}
is satisfied if Algorithm~\ref{alg:PBRAT}
issues a positive verdict:
If the algorithm 
accepts
because of $\rup{\formf}{\constrc}$,
then $\formf \models \constrc$ and it is 
in order
to add the constraint.
Otherwise, the algorithm 
will reject unless
for all constraints $\constrd$
in $\restrict{(\formf \land \constrc{})}{\mapassmnt}$, i.e., all
constraints on the right hand side of the implication 
in Proposition~\ref{prop:redundant},
it holds that
(a)~$\constrd \in \formf$, 
(b)~$\negc{ \constrc}$ implies $\constrd$ syntactically, or
(c)~$\rup{\formf \land \neg \constrc}{\constrd}$ evaluates to true. 
In all three cases it
follows that $\formf \land \constrc{} \models \constrd$, as desired.

We remark that this
algorithm is very similar to what is used 
for checking~\rat clauses in \drat proof verification, 
except that our unit propagation is on PB
constraints rather than clauses and that we need 
the additional 
syntactic check on
line~\ref{alg:PBRAT_implication_check} in Algorithm~\ref{alg:PBRAT}. 
To see why this 
extra step
is necessary, 
note that if we used only unit propagation, then we would fail to
certify the correctness 
of our example above.
Assuming for simplicity that $\formf = \emptyset$,
if we try to verify
$
\clow \land \negc{ \chigh }
\models \restrict{\clow}{\mapassmnt}
$
by reverse unit propagation we get the constraints
\begin{subequations}
\begin{align}
  \label{eq:failed-rup}
  2 \vary + 
  \olnot{\varx}_1 + \olnot{\varx}_2 + \olnot{\varx}_3 &\geq 2      
  &&
     \commentbrackettext{$\clow$ in~\eqref{eq:reification-example-encoding-2}}
  \\
  2 \vary + \olnot{\varx}_1 + \olnot{\varx}_2 + \olnot{\varx}_3 &\geq 4
  &&
     \commentbrackettext{$\negc{\chigh}$ in~\refeq{eq:neg-C-normalized}}
  \\
  \varx_1 + \varx_2 + \varx_3 &\geq 2      
  &&
     \commentbrackettext{negation of desired RUP constraint
     $\negc{(\restrict{\clow}{\mapassmnt})}$}
\end{align}
\end{subequations}
and although visual inspection shows that this collection of
constraints is inconsistent, since it requires
a majority of the variables
$\set{\varx_1, \varx_2, \varx_3}$
to be true and false at the same time, 
unit propagation  is too myopic to see this contradiction
and only yields 
$\vary \mapsto 1$.
Thanks to the fact that we instead use the stronger checks in
Algorithm~\ref{alg:PBRAT},  
we can automatically detect that the implication 
$
\clow \land \negc{ \chigh }
\models \restrict{\clow}{\mapassmnt}
$
holds.
This means that to introduce extension variables encoding reifications
$\vary \reifequiv \constrc$,
we do not need to do anything more than just specifying witness
assignments to the new variable as in the example above
for constraints
\refeq{eq:reification-example-encoding-1}
and~\refeq{eq:reification-example-encoding-2}.
For completeness, we write out the details in the general case for the
constraints \refeq{eq:reification-encoding-1}
and~\refeq{eq:reification-encoding-2} in the next proposition.

\providecommand{\pcformatvalue}[1]{\mathtt{#1}}
\providecommand{\pcaccept}{\pcformatvalue{ACCEPT}}
\providecommand{\pcreject}{\pcformatvalue{REJECT}}

\renewcommand{\acceptword}{$\pcaccept$}
\renewcommand{\rejectword}{$\pcreject$}

\begin{algorithm}[t]
  \begin{algorithmic}[1]
    \Procedure{RedundancyCheck}{$\formf, \constrc, \mapassmnt$}
        \Comment $\constrc, \mapassmnt$ are given in the
proof log 
    \If{$\rup{\formf}{\constrc}$}
    \Return \acceptword
    \EndIf
    \For{$\constrd \in \restrict{(\formf \land \constrc{})}{\mapassmnt}$}
    \If{\textbf{not} 
      $
      \bigl(    %
      \constrd \in \formf$ \textbf{or}
      $\negc{ \constrc}$ implies $\constrd$ syntactically
      \label{alg:PBRAT_implication_check} 
      \textbf{or}
      $\rup{\formf \land \neg \constrc}{\constrd}
      \bigr)    %
      $
    }
    \State \Return \rejectword
    \EndIf
    \EndFor
    \State\Return \acceptword
    \EndProcedure
  \end{algorithmic}
  \caption{Automatically checking \mapredundancy{}
    of $\constrc$ with respect to~$\formf$
  }
  \label{alg:PBRAT}
\end{algorithm}

\begin{proposition}
  \label{prop:reification-derivation}
  Let $\formf$ be a 
  pseudo-Boolean  %
  formula and $\constrc$ be a
  pseudo-Boolean   %
  constraint, 
  and
  suppose
  $\vary$ is a fresh variable that does not appear
  in~$\formf$ or~$\constrc$.
  Then the constraints
  \refeq{eq:reification-encoding-1}
  and~\refeq{eq:reification-encoding-2}
  encoding
  $\vary \reifequiv \constrc$
  can both be derived and added  to~$\formf$ 
  by the \redbasedstrengthening rule using
  Algorithm~\ref{alg:PBRAT}
  to verify the \mapredundancy conditions.
\end{proposition}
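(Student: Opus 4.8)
The plan is to mirror the worked example that immediately precedes the proposition, replacing the concrete constraint $\varx_1 + \varx_2 + \varx_3 \geq 2$ by the generic constraint $\sumnodisplay_i \coeffa_i \litell_i \geq \dega$ (denoted $\constrc$). Write $\sigma = \sumnodisplay_i \coeffa_i$ for the coefficient sum, so that \refeq{eq:reification-encoding-1} reads $\dega \olnot{\vary} + \sumnodisplay_i \coeffa_i \litell_i \geq \dega$ and \refeq{eq:reification-encoding-2} reads $(\sigma - \dega + 1)\vary + \sumnodisplay_i \coeffa_i \olnot{\litell}_i \geq \sigma - \dega + 1$, both already in normalized form (which presupposes, as is implicit in the statement, that $0 \leq \dega \leq \sigma + 1$). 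I would add these two constraints to $\formf$ one after the other by the \redbasedstrengthening{} rule, each time exhibiting an explicit witness $\mapassmnt$ and checking that Algorithm~\ref{alg:PBRAT} accepts; since a positive verdict of Algorithm~\ref{alg:PBRAT} guarantees the condition of Proposition~\ref{prop:redundant}, this is enough to license the addition.

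First I would add \refeq{eq:reification-encoding-2} to $\formf$ using the witness $\mapassmnt = \set{\vary \mapsto 1}$. This witness satisfies \refeq{eq:reification-encoding-2}, because already the single term $(\sigma - \dega + 1)\vary$ meets the degree, so whether or not the initial \rupname test of Algorithm~\ref{alg:PBRAT} fires is immaterial. In the for-loop, since $\vary$ is fresh we have $\restrict{\formf}{\mapassmnt} = \formf$, so every constraint in $\restrict{\formf}{\mapassmnt}$ is already a member of $\formf$ and passes by case~(a); and restricting \refeq{eq:reification-encoding-2} by $\mapassmnt$ yields the trivially true constraint $\sumnodisplay_i \coeffa_i \olnot{\litell}_i \geq 0$, which is discharged at once by any of the three checks (for instance it is \rupname with respect to $\formf \land \negc{\constrc}$). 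Hence Algorithm~\ref{alg:PBRAT} accepts and \refeq{eq:reification-encoding-2} may be added to $\formf$.

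Next I would add \refeq{eq:reification-encoding-1} to the formula $\formf$ extended by~\refeq{eq:reification-encoding-2}, using the witness $\mapassmnt = \set{\vary \mapsto 0}$, which satisfies \refeq{eq:reification-encoding-1} because the term $\dega \olnot{\vary}$ alone reaches the degree. In the for-loop, the constraints coming from $\restrict{\formf}{\mapassmnt} = \formf$ pass by case~(a), and restricting \refeq{eq:reification-encoding-1} by $\mapassmnt$ is again trivial. The one substantive obligation is the restriction of \refeq{eq:reification-encoding-2} by $\mapassmnt$, namely $\sumnodisplay_i \coeffa_i \olnot{\litell}_i \geq \sigma - \dega + 1$, and for this I would use the syntactic-implication check, case~(b): the normalized negation of \refeq{eq:reification-encoding-1} is $\dega \vary + \sumnodisplay_i \coeffa_i \olnot{\litell}_i \geq \sigma + 1$, and adding to it the literal axiom $\olnot{\vary} \geq 0$ with multiplicity $\dega$ and then rewriting $\dega \vary + \dega \olnot{\vary} = \dega$ gives exactly $\sumnodisplay_i \coeffa_i \olnot{\litell}_i \geq \sigma - \dega + 1$. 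Thus the negation of \refeq{eq:reification-encoding-1} implies this constraint syntactically, Algorithm~\ref{alg:PBRAT} accepts, and \refeq{eq:reification-encoding-1} may be added, completing the derivation of $\vary \reifequiv \constrc$.

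The whole argument is essentially normalization bookkeeping, and the only point that needs genuine care — the main obstacle — is the syntactic-implication step: I must check that the single rewriting $\dega \vary + \dega \olnot{\vary} = \dega$, legitimate precisely because normalization replaces $\olnot{\vary}$ by $1 - \vary$, is the only manipulation used beyond adding copies of a literal axiom, so that the derivation really counts as ``implied syntactically'' in the sense of the preliminaries rather than as a (short) cutting-planes derivation. It is also worth recording that the degenerate endpoints $\dega = 0$ (where \refeq{eq:reification-encoding-2} forces $\vary \mapsto 1$) and $\dega = \sigma + 1$ (where \refeq{eq:reification-encoding-1} forces $\vary \mapsto 0$) are handled by the very same two witnesses, with the multiplicity $\dega$ in the syntactic check simply taking the value $0$ or $\sigma + 1$ respectively.
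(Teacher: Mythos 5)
Your proposal is correct and follows essentially the same route as the paper's proof: the same order of introduction ($\constrc_{\ref{eq:reification-encoding-2}}$ first, then $\constrc_{\ref{eq:reification-encoding-1}}$), the same witnesses $\set{\vary \mapsto 1}$ and $\set{\vary \mapsto 0}$, and the same key step of showing that the normalized negation of $\constrc_{\ref{eq:reification-encoding-1}}$ syntactically implies $\restrict{\constrc_{\ref{eq:reification-encoding-2}}}{\mapassmnt}$ by adding $\dega$ copies of the literal axiom $\olnot{\vary} \geq 0$ and cancelling $\vary + \olnot{\vary} = 1$. Your extra remarks on the normalization condition $\dega \leq \sigma + 1$ and the degenerate endpoints are harmless additions not present in the paper.
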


\renewcommand{\clow}{\constrc_{\ref{eq:reification-encoding-2}}}
\renewcommand{\chigh}{\constrc_{\ref{eq:reification-encoding-1}}}

\begin{proof}
Let us write $\chigh$ for the constraint
in~\eqref{eq:reification-encoding-1} and $\clow$
for~\eqref{eq:reification-encoding-2}.
To show that $\clow$ is \mapredundant with respect to~$\formf$ we
choose the witness $\mapassmnt = \set{\vary \mapsto 1}$, which clearly
satisfies~$\clow$.
Since the negation of a satisfied constraint is contradiction, 
this means, technically speaking, that 
$\restrict{\clow}{\mapassmnt}$
is RUP with respect to~$\formf$.
Since $\vary$ does not appear
in~$\formf$ we have $\restrict{\constrd}{\mapassmnt} = \constrd$ for all
$\constrd \in \formf$, which means that all constraints pass the check
on  Line~\ref{alg:PBRAT_implication_check} in  Algorithm~\ref{alg:PBRAT}.

As in our example above,
showing that $\chigh$ is \mapredundant with respect to $\formf \land
\clow$ 
requires slightly more work.
Here we choose the witness
$\mapassmnt = \set{\vary \mapsto 0}$, 
which satisfies~$\chigh$ and again
leaves~$\formf$ unchanged, which means that 
implication checks for  constraints in~$\formf$ are again vacuous.
The only constraint left to check is
$\restrict{\clow}{\mapassmnt}$, which is implied syntactically by
$\neg \chigh$, as we will see next. The negation of $\chigh$ is
\begin{subequations}
\begin{align}
    \dega \olnot{\vary} + 
    \sumnodisplay_i \coeffa_i \litell_i &\leq \dega - 1
\\
\shortintertext{or}
    -\dega \olnot{\vary} +
    \sumnodisplay_i -\coeffa_i \litell_i &\geq -\dega + 1
\eqcomma
\\
\intertext{which in normalized form becomes}
    \dega \vary +
    \sumnodisplay_i \coeffa_i \olnot{\litell}_i 
    &\geq 
    1 + \sumnodisplay_i \coeffa_i
\end{align}
\end{subequations}
(rewriting using the equality
$\litell = 1  - \olnot{\litell}$
to obtain a positive linear combination of literals on the left-hand
side of the inequality).
Adding $\dega$ times  the literal axiom
$\olnot{\vary} \geq 0$
to $\negc{\chigh}$ and applying cancellation
\mbox{$\vary +  \olnot{\vary} = 1$},
we obtain
\begin{equation}
\sumnodisplay_i \coeffa_i \olnot{\litell}_i \geq -\dega + 1 + \sumnodisplay_i \coeffa_i \eqcomma
\end{equation}
which is $\restrict{\clow}{\mapassmnt}$. Hence, $\negc{ \chigh }$
syntactically implies~$\restrict{\clow}{\mapassmnt}$, and so the 
condition on Line~\ref{alg:PBRAT_implication_check} is satisfied. 

This concludes the proof that the conditions required to derive the 
constraints~$\clow$ and~$\chigh$ by \redbasedstrengthening 
can be checked efficiently by Algorithm~\ref{alg:PBRAT}
regardless of what the pseudo-Boolean formula~$\formf$ is.
\end{proof}

\newcommand{\detectXor}{detectParities}
\newcommand{\propagate}{propagate}
\newcommand{\propagateXor}{gaussianElimination}
\newcommand{\nextDecision}{nextDecsion}
\newcommand{\analyse}{analyse}

\section{Proof Logging for XOR Constraints}
\label{sec:proof-logging}

We now proceed to explain how the cutting planes proof system in
\refsec{sec:prelims}
extended with the \redbasedstrengthening rule in
\refsec{sec:mapping-redundancy}
can be used to certify the correctness of parity reasoning.

An \emph{XOR} or \emph{parity constraint}, i.e., an equality modulo 2,
over $k$
Boolean
variables is written as
\begin{equation}
  \label{eq:xor}
  x_1 \oplus x_2 \oplus \dots \oplus x_\s{k} = b 
\end{equation}
for
$b \in \set{0,1}$. 
Note that we can assume that there is no
parity constraint with a negated variable $\olnot{\varx{}}$, 
because we can always substitute
$\olnot{\varx{}} = \varx{} \oplus 1$.
Systems of XOR constraints can be handled in a solver through Gaussian
elimination
\jnshortcite{SNC09SAT4Crypto,HR12SatMeetsGauss,LJN12CompleteParity} or
conflict analysis~\jnshortcite{LJN12ConflictXor}. 
In this paper we will focus on the integration of Gaussian elimination
into conflict-driven clause learning
(CDCL)~\jnshortcite{BS97UsingCSP,MS99Grasp,MMZZM01Engineering}, and so we
start by a quick review of how the CDCL main loop works and how parity
reasoning is included. The reader can consult the pseudocode in
Algorithm~\ref{alg:CDCL} to complement the description below.

\providecommand{\cdcldatabase}{\mathcal{D}}
\providecommand{\cdcltrail}{\mathit{trail}}

\begin{algorithm*}[t]
  \caption{Conflict-driven clause learning with parity reasoning}
  \label{alg:CDCL}

  \begin{algorithmic}[1]
    \Procedure{solve}{$\formf$}
        \State $\cdcltrail \gets \emptyset$ 
        \State \label{line:xor-detection}
        $G \gets$ \parityreasoningformat{detectParities}($\formf$)
        \State $\cdcldatabase \gets \formf$ 
        \While{True}
        \State 
        ($\pclit, \pcreason$) $\gets$ 
        propagate($\cdcldatabase$, $\cdcltrail$)
        \If{$\pclit = \pcnull$} 
        ($\pclit, \pcreason$) $\gets$ \label{line:xor-propagation}
        \parityreasoningformat{propagateXOR}($G$, $\cdcltrail$)\EndIf
        \If{$\pclit = \pcnull$} 
        ($\pclit, \pcreason$) $\gets$ 
        (nextDecision(), $\pcnull$)\EndIf
        \If{$\pclit = \pcnull$} 
        \Return $\pcsat$ \EndIf
        \State $\cdcltrail$.push($\pclit, \pcreason$)
        \If{hasConflict($\cdcltrail$)}
        \label{line:conflict-detection}
          \State $\pclearned \gets$ analyse($\cdcltrail$)
          \If{$\pclearned = \bot$}
          \State \Return $\pcunsat$
          \Else
            \State $\cdcldatabase \gets \cdcldatabase \cup \set{\pclearned}$
            \State $\cdcltrail \gets$ backjump($\cdcltrail$)
          \EndIf
        \EndIf
        \EndWhile
    \EndProcedure
  \end{algorithmic}
\end{algorithm*}

Let us first describe CDCL without parity reasoning,
i.e., without the \parityreasoningformatdescription code on 
lines~\ref{line:xor-detection}
and~\ref{line:xor-propagation}.
When run on a formula~$\formf$, the CDCL solver has a 
\emph{database}~$\cdcldatabase$ of clauses,
which is initialized to the clauses in~$\formf$.
The solver also maintains a
\emph{trail} 
consisting of an ordered list of literals assigned to true together with
reasons for these assignments. In what follows, it will be convenient
to identify the trail with the assignment setting the literals on the
trail to true. The trail is initialized to be empty.

The solver adds assigned literals to the trail, one by one,
according to the following procedure.
If some clause 
$\pcreason \in \cdcldatabase$
unit propagates an unassigned literal~$\litell$ in the sense explained in
\refsec{sec:prelims}
(which for a 
clause~$\pcreason$ 
means that all literals in the clause
except~$\litell$ are falsified by the current trail),
then the trail is extended by adding~$\litell$ 
with~$\pcreason$ 
as the
\emph{reason clause} explaining the propagation.
(If several clauses propagate at the same time, then ties will be split in
a somewhat arbitrary fashion depending on low-level details in the
algorithm implementation.) 
Otherwise, the solver uses a decision heuristic to pick some
literal to assign to true. Such a decision literal has no reason
clause. If there is no literal left to assign, then this means that
all variables have been assigned without violating any clause
in~$\formf$. In other words, a satisfying assignment has been found,
and so the solver returns that the formula is satisfiable.
Assuming instead that some literal has been added to the trail, 
this literal can  lead to some clause~$\constrd \in \cdcldatabase$ being
falsified by the  trail. This is referred to as a 
\emph{conflict} with
$\constrd$ as the \emph{conflict clause}.
When a conflict arises, a 
\emph{conflict analysis} 
algorithm is called to derive a new 
clause~$\pclearned$ 
from  the conflict clause and the reason clauses
currently on the trail. If the result of this conflict analysis is the
empty clause~$\bot$ without any literals, then contradiction has been
derived and the solver returns that the formula is unsatisfiable.
Otherwise, the 
clause~$\pclearned$ 
is \emph{learned}, i.e., added to the
clause database, 
after which the solver
\emph{backjumps} by removing some literals from the trail
until~$\pclearned$ 
is no longer falsified. The details of exactly how
backjumps are done are not relevant for our proof logging discussion,
and there are also other details of CDCL that we are ignoring in this
description, such as that the solver sometimes does 
\emph{restarts}
(which means resetting the trail to be empty)
and sometimes performs 
\emph{database reduction}
(removing learned clauses from~$\cdcldatabase$).

To add parity reasoning to CDCL, the solver is modified by first 
detecting implicit parity constraints in the CNF formula on
line~\ref{line:xor-detection} in Algorithm~\ref{alg:CDCL}.
This can be done by checking syntactically if all clauses in the
canonical clausal encoding of a parity constraint are present. For
instance, the clauses
\begin{subequations}
  \begin{equation}
    \label{eq:parity-clauses}
    \set{
      x_1 \lor x_2 \lor x_3 
      , \ 
      \olnot{x}_1 \lor \olnot{x}_2 \lor x_3 
      , \
      \olnot{x}_1 \lor x_2 \lor \olnot{x}_3 
      , \
      x_1 \lor \olnot{x}_2 \lor \olnot{x}_3 
    }
  \end{equation}
  encode the parity constraint
  \begin{equation}
    \label{eq:parity-constraint}
    x_1 \oplus x_2 \oplus x_3 = 1
    \eqperiod
  \end{equation}
\end{subequations}
Parity constraints detected in this way can then be used for
Gaussian elimination, which generates new parity constraints.
If all variables in a parity constraint except one is assigned by the
trail, then the final variable is propagated to a value on
line~\ref{line:xor-propagation}. It can also happen that a parity
constraint is violated by the current trail, and detection of this
condition is included on line~\ref{line:conflict-detection}.
In both of these cases, the solver will need a reason or conflict
clause, respectively, to justify the steps taken. Such a clause can be
computed from the parity constraint in a straightforward way. Suppose,
for example that from parity constraints
$x_1 \oplus x_2 \oplus x_3 = 1$
and
$x_2 \oplus x_3 \oplus x_4 = 1$
Gaussian elimination has derived 
$x_1 \oplus x_4 = 0$,
and  suppose also
that $x_1$ is assigned to true on the trail. Then
$x_4$ will also propagate to true, and the reason clause provided for this
will be
$
\olnot{x}_1 \lor x_2
$.

There are many variations on how this general idea can be
implemented.  For instance, parity detection can also be run later during
the search over the clause database~$\cdcldatabase$, 
as done in 
\jncitenameurlpunctuation{\cryptominisat}{CryptoMiniSat}{}{.}
Another interesting question studied 
\jnciteinorby{YM21EngineeringPBXOR} is whether it is better to propagate all
clauses first (as in our pseudocode here) or all parity constraints
first, or if the propagation on different types of constraints  
should be interleaved. However, such aspects are not relevant to
how proof logging for parity reasoning should be designed, and our
description in Algorithm~\ref{alg:CDCL} has been chosen mainly to make 
the exposition simple.

\providecommand{\proofloggingitem}[1]{\textbf{#1:}\xspace}

To provide proof logging for CDCL solvers with Gaussian elimination,
we will need the four ingredients listed below:
\begin{enumerate}
\item
  \proofloggingitem{XOR encoding}
  An efficient encoding of parity constraints as
  linear pseudo-Boolean constraints.
\item%
  \proofloggingitem{XOR reasoning}
  A method of deriving (the pseudo-Boolean encoding of) a new parity
  constraint from existing parity constraints.
\item%
  \proofloggingitem{Reason and conflict clause generation}
  The ability to prove the validity of reason and conflict clauses
  from the pseudo-Boolean encoding of parity constraints when such
  parity constraints give rise to propagations or conflicts,
  respectively.
\item%
  \proofloggingitem{Translation from CNF}
  A way of translating clausal encodings of parity constraints
  to pseudo-Boolean form
  (which is where we will need to go beyond cutting planes by using
  extension variables and \redbasedstrengthening).
\end{enumerate}
We will describe these components in detail in the rest of this
section. In \refsec{sec:example}, we will then provide a worked-out
example to illustrate how everything comes together to yield a method
for CDCL solving with parity constraints.

\subsection{%
  Linear    
  Pseudo-Boolean Encoding of Parity Constraints}
\label{sec:proof-logging-pb-encoding}

Our encoding of parity constraints in linear pseudo-Boolean form is
based on the observation
\jnciteinorby{DGP04GeneralizingBooleanSatisfiabilityI}
that for any partial assignment to the variables
$x_1, \ldots, x_k$, 
the parity constraint
$x_1 \oplus x_2 \oplus \dots \oplus x_k = b$ 
as in~\eqref{eq:xor} is satisfiable if and only if the 
\mbox{$0-1$} integer linear equality
\begin{equation}
  \label{eq:xor-pb}
  \sum_{i \in [k]} x_i
  = b + \sum_{i\in[\lfloor k/2 \rfloor]} 2 y_i
\end{equation}
is satisfiable, 
where
$y_1, \ldots, y_{\floor{k/2}}$
are fresh variables not appearing in other constraints.
Since the variables~$y_{i}$ are otherwise unconstrained, the
right-hand side can take any even (odd) value for $b = 0$ ($b = 1$) in
the range from $0$ to $k$, and these are exactly the values that we want
to allow for $\sum_{i \in [k]} x_{i}$.
Recalling that any equality on the form~\refeq{eq:equality}
can be represented with the two inequalities
\refeq{eq:equality-ineq-1} and~\refeq{eq:equality-ineq-2},
we have obtained a representation of parity constraints as linear
pseudo-Boolean inequalities.

In fact, we can generalize this  by observing that if we
let $\s{\mathcal{B}}$ denote any integer linear combination of
variables, possibly also with a constant term, then
the two inequalities
\begin{subequations}
  \begin{align}
    \label{eq:general-XOR-PB-geq} 
    \sum_{i \in [k]} x_{i}
    &\geq b + 2 \mathcal{B}
    \\
    \label{eq:general-XOR-PB-leq} 
    \sum_{i \in [k]} -x_{i}
    &\geq - b - 2 \mathcal{B}
   \end{align}
  forming the equality 
  $\sum_{i \in [k]} x_{i} = b + 2 \mathcal{B}$       %
    imply the parity constraint
  \begin{equation}
    \label{eq:general-XOR}
    x_1 \oplus x_2 \oplus \cdots \oplus x_\s{k} = b
    \eqperiod
  \end{equation}
\end{subequations}
We will make repeated use of this observation
below.

\subsection{XOR Reasoning Using Pseudo-Boolean Constraints}
\label{sec:proof-logging-xor-reasoning}

Whenever we want to combine two  XOR constraints 
to derive a new XOR constraint  as is done
during Gaussian elimination, we only need to add the pseudo-Boolean
equalities corresponding to these two XOR constraints. Consider again
our example derivation
\begin{equation}
\AxiomC{$x_1 \oplus x_2 \oplus x_3 = 1$}
\AxiomC{$x_2 \oplus x_3 \oplus x_4 = 1$}
\BinaryInfC{$x_1 \oplus x_4 = 0$}
\DisplayProof 
\end{equation}
from before, and assume that the two premises are represented
in pseudo-Boolean form as
\begin{subequations}
  \begin{align}
    \label{eq:first-xor-PB}
    x_1 + x_2 + x_3 &= 2 y_1 + 1  
    \\
    \shortintertext{and}
    \label{eq:second-xor-PB}
    x_2 + x_3 + x_4 &= 2 y_2 + 1
    \\
    \intertext{for fresh variables~$y_1$ and~$y_2$.
    Then
    adding both equalities together we obtain}
    x_1 + 2 x_2 + 2 x_3 + x_4 &= 2 y_1 + 2 y_2 + 2
  \end{align}
\end{subequations}
which implies the desired XOR constraint by
the observation we just made regarding
\eqref{eq:general-XOR-PB-geq}--\eqref{eq:general-XOR}.
(Recall that a linear combination of equalities as
in~\eqref{eq:equality} is a notational shorthand for taking pairwise
linear combinations of inequalities~\eqref{eq:equality-ineq-1}
and~\eqref{eq:equality-ineq-2}.)

\subsection{Reason and Conflict Clause Generation
  from XOR Constraints
}
\label{sec:proof-logging-clause-generation}

As explained above, CDCL solvers justify all propagation and conflict
analysis steps using clauses. If we want to use XOR constraints to
propagate forced variable assignments or derive contradiction, then we
need to provide clauses that justify such derivation steps,  together
with proof logging steps explaining why these clauses are valid.  We
next show how to derive such clauses from pseudo-Boolean encodings of
XOR constraints.

Suppose we have a parity constraint
encoded by inequalities of the
form~\refeq{eq:general-XOR-PB-geq}--\refeq{eq:general-XOR-PB-leq},
and
let~${\rho}$
be an assignment to the 
variables $x_1, \ldots, x_k$
that is inconsistent 
with these inequalities
because it falsifies the implied XOR constraint~\eqref{eq:general-XOR}. 
We want to derive 
from~\refeq{eq:general-XOR-PB-geq}--\refeq{eq:general-XOR-PB-leq}
a clause that is falsified under ${\rho}$.
Let
\begin{equation}
  \label{eq:vars-false}
  \varsfalse{\rho} = \setdescr{i \in [k]}{\rho(x_i) = 0}  
\end{equation}
be the set of indices of variables assigned to false by~${\rho}$ and
\begin{equation}
  \label{eq:vars-true}
  \varstrue{\rho} = \setdescr{j \in [k]}{\rho(x_j) = 1}  
\end{equation}
the indices of variables  assigned  to true . Using 
the literal axiom rule~\eqref{eq:cp-rule-literal}
we can derive
(the normalized form of) the trivially true constraint
\begin{equation}
  \sum_{i \in {\varsfalse{\rho}}} x_i + 
  \sum_{ j\in {\varstrue{\rho}}} -x_j 
  \geq  -\setsize {\varstrue{\rho}}
  \eqcomma
  \label{eq:from_literal_axiom}
\end{equation}
which when added to~\eqref{eq:general-XOR-PB-geq} yields
\begin{equation}
  \label{eq:two-times-false}
  \sum_{i \in {\varsfalse{\rho}}} 2 x_i \geq b -
  \setsize{\varstrue{\rho}}
  + 2{\mathcal{B}} 
  \eqperiod{}
\end{equation}
By assumption, we have that
$b - \setsize{\varstrue{\rho}}$ 
is odd, since otherwise
$\rho$ would not falsify the XOR constraint implied
by~\eqref{eq:general-XOR-PB-geq}--\eqref{eq:general-XOR-PB-leq}.
All other terms in the inequality~\refeq{eq:two-times-false} are
divisible by~$2$. 
Hence, even though~\eqref{eq:two-times-false} 
is not presented in normalized form, we can see that if we apply the
division rule~\eqref{eq:cp-rule-div} with divisor~$2$,
this will round up and increase the  degree of falsity. 
This means that if we divide the constraint~\eqref{eq:two-times-false}
and then  multiply by~$2$
(which is just a special case of the linear combination
rule~\eqref{eq:cp-rule-lincomb}), we get
\begin{equation}
    \sum_{i \in {\varsfalse{\rho}}} 2 x_i \geq b -
    \setsize{\varstrue{\rho}} + 1 + 2 {\mathcal{B}} \eqperiod{}
\end{equation}
We continue by adding~\eqref{eq:general-XOR-PB-leq} to get
\begin{equation}
   \sum_{i \in {\varsfalse{\rho}}} x_i - \sum_{j \in {\varstrue{\rho}}} x_j \geq
   1 - \setsize{\varstrue{\rho}}
   \eqcomma
\end{equation}
which is the same constraint as
\begin{equation}
  \label{eq:final-reason-clause}
  \sum_{i \in {\varsfalse{\rho}}} x_i + \sum_{j \in {\varstrue{\rho}}} \olnot{x}_j \geq   1
\end{equation}
after normalization.
This last constraint, which is a disjunctive clause, is falsified under $\rho$
as desired, and so can serve as the conflict clause justifying 
why the assignment~${\rho}$ is inconsistent.
Derivations of reason clauses for propagation work in a similar
way---essentially, we can pretend that the propagated variable is set
to the wrong value and then perform the derivation above to obtain a
clause~\eqref{eq:final-reason-clause} that propagates the variable to
the right value instead. 
An example for deriving a reason clause can be found 
towards
the end of \refsec{sec:example}.

\subsection{Translating Parity Constraints from CNF
  to Pseudo-Boolean Form}
\label{sec:proof-logging-translation}

An XOR constraint as in~\eqref{eq:xor} can be encoded into CNF 
in a canonical way by 
including for each of the $2^{{k}-1}$ assignments falsifying the 
constraint the disjunctive clause ruling out that assignment.
For example, for $k = 3$ and $b = 1$
the parity constrain~\eqref{eq:parity-constraint}
can be encoded by the clauses in~\eqref{eq:parity-clauses},
which are written as the \mbox{$0$-$1$} integer linear inequalities
\begin{subequations}
  \begin{align}
    {x}_1 + {x}_2 + {x}_3 &\geq 1 \\
    \olnot{{x}}_1 + \olnot{{x}}_2 + {x}_3 &\geq 1 \\
    \olnot{{x}}_1 + {x}_2 + \olnot{{x}}_3 &\geq 1 \\
    {x}_1 + \olnot{{x}}_2 + \olnot{{x}}_3 &\geq 1
  \end{align}
\end{subequations}
in pseudo-Boolean form.
Since the number of clauses in this canonical CNF encoding of an XOR
constraint scales exponentially with the number of variables,
it is only feasible to encode short XORs into CNF in this manner.
However, it is possible to split up a
long XOR constraint into multiple constant-size XORs using auxiliary
variables $z_{i}$, which represent the partial 
parities up to and
including ${x}_{i}$, i.e., 
$z_i = \sum_{{j}\in[i]} {x}_{j}\ (\bmod\ 2)$.
In this way, a collection of parity constraints
\begin{subequations}
\begin{align}
  \label{eq:long-xor-1}
  {x}_1 \oplus {x}_2 \oplus {z}_2 &= 0 \\
  \label{eq:long-xor-2}
  {z}_2 \oplus {x}_3 \oplus {z}_3 &= 0 \\
  \nonumber
  &\hspace{0.60em}\vdots \\
  \label{eq:long-xor-last}
  {z}_{k-2} \oplus {x}_{{k}-1} \oplus {x}_{k} &= b 
\end{align}
\end{subequations}
can be used to represent the constraint~\eqref{eq:xor}.
Assuming that we can split up parity constraints in this manner,
we will only need to translate short parity constraints from CNF to
pseudo-Boolean form. The original, long, parity
constraints can then be recovered by XOR reasoning, just summing up the
\mbox{constraints~\refeq{eq:long-xor-1}--\refeq{eq:long-xor-last}},
and proof logging for this derivation can be done as described in
\refsec{sec:proof-logging-xor-reasoning}
above.

We perform the translation to the pseudo-Boolean XOR encoding from CNF
in two steps, which we will describe in more detail after providing
the general idea.
The first step is to derive the constraint
\begin{equation}
  \label{eq:xor-recover}
  \sum_{{i} \in [{k}]} {x}_{i} 
  = 
  \sum_{{i}\in[\lfloor{k}/2\rfloor]} 2 {y}_{i} + y' 
  \eqcomma 
\end{equation}
where~$y'$ and 
$y_{i}$, $i \in [\lfloor{k}/2\rfloor]$, 
are all
fresh variables. Note that adding the equality
constraint~\eqref{eq:xor-recover} to any formula does not affect
satisfiability, because we can always assign the fresh variables so
that this additional constraint holds true.
However, although the constraint~\refeq{eq:xor-recover} is redundant
in the sense of Proposition~\ref{prop:redundant}, we cannot use the
\redbasedstrengthening rule to derive the constraint, because we do
not have an efficient procedure for constructing a
witness~$\mapassmnt$ that is efficiently verifiable by
Algorithm~\ref{alg:PBRAT}.
Instead, we will introduce the auxiliary variables
$y'$ and $y_{i}$, $i \in [\lfloor{k}/2\rfloor]$,
one by one, in a similar fashion to what
was done in \refsec{sec:mapping-redundancy}.
We remark that an alternative to~\eqref{eq:xor-recover} would be to
encode the sum 
$
2 \cdot 
\Floor{
  \frac{1}{2}
  \sum_{i \in [k]} x_i 
}
$
of the $x_i$-variables rounded down to the nearest even integer
as a sum of powers of~$2$, resulting in an equality constraint
\begin{equation} 
  \sum_{i \in [k]} x_{i} 
  = 
  \sum_{i\in[\lceil\log_2(k/2)\rceil]} 2^{i} {y}_{i} + y' 
  \eqperiod
\end{equation}
For parity constraints over a large number of variables, 
this encoding
has a substantially smaller number of auxiliary variables. However, since
we are recovering parity constraints from CNF, we only expect to have
parity constraints over few variables, as the number of
clauses in the CNF encoding is exponential in the number of variables.

The second step, once we have derived the equality
constraint~\eqref{eq:xor-recover}, is to brute-force over all possible
assignments to the  $x_{i}$-variables to derive the equality
\begin{equation}
  \label{eq:y-prime-equals-b}
  y' = b
  \eqperiod
\end{equation}
Summing the equalities~\eqref{eq:xor-recover} and~\refeq{eq:y-prime-equals-b},
we obtain a constraint of the desired
form~\refeq{eq:xor-pb}.
Note that since we are considering all possible assignments to the 
$x_{i}$-variables, this derivation will require an exponential number of 
derivation steps
measured in the number of variables,
but this is still polynomial measured in the number of clauses in the canonical
CNF encoding of parity constraints. 
We now proceed to describe  this process in detail.

\begin{figure}[t]
  \begin{subfigure}[b]{0.40\textwidth}
    \centering
    \includegraphics[scale=0.4]{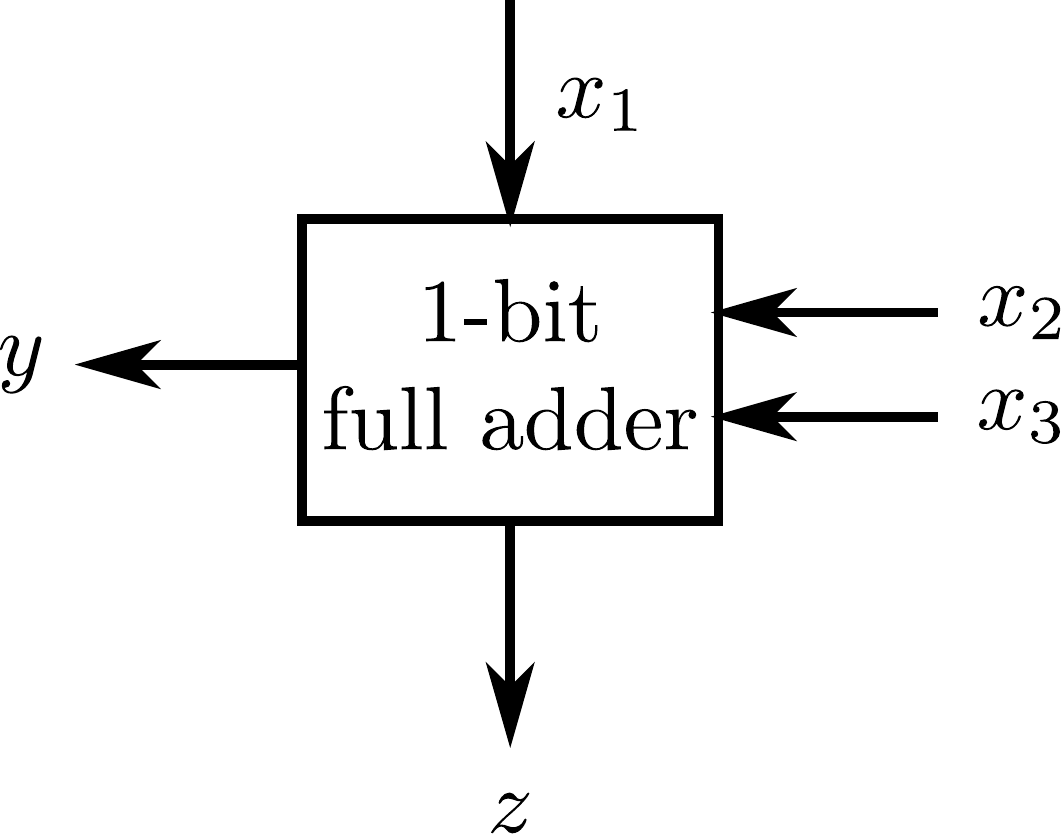}
    \caption{A 1-bit full adder.}
    \label{fig:full-adder}
  \end{subfigure}
  \hfill
  \begin{subfigure}[b]{0.40\textwidth}
    \centering
    \includegraphics[scale=0.4]{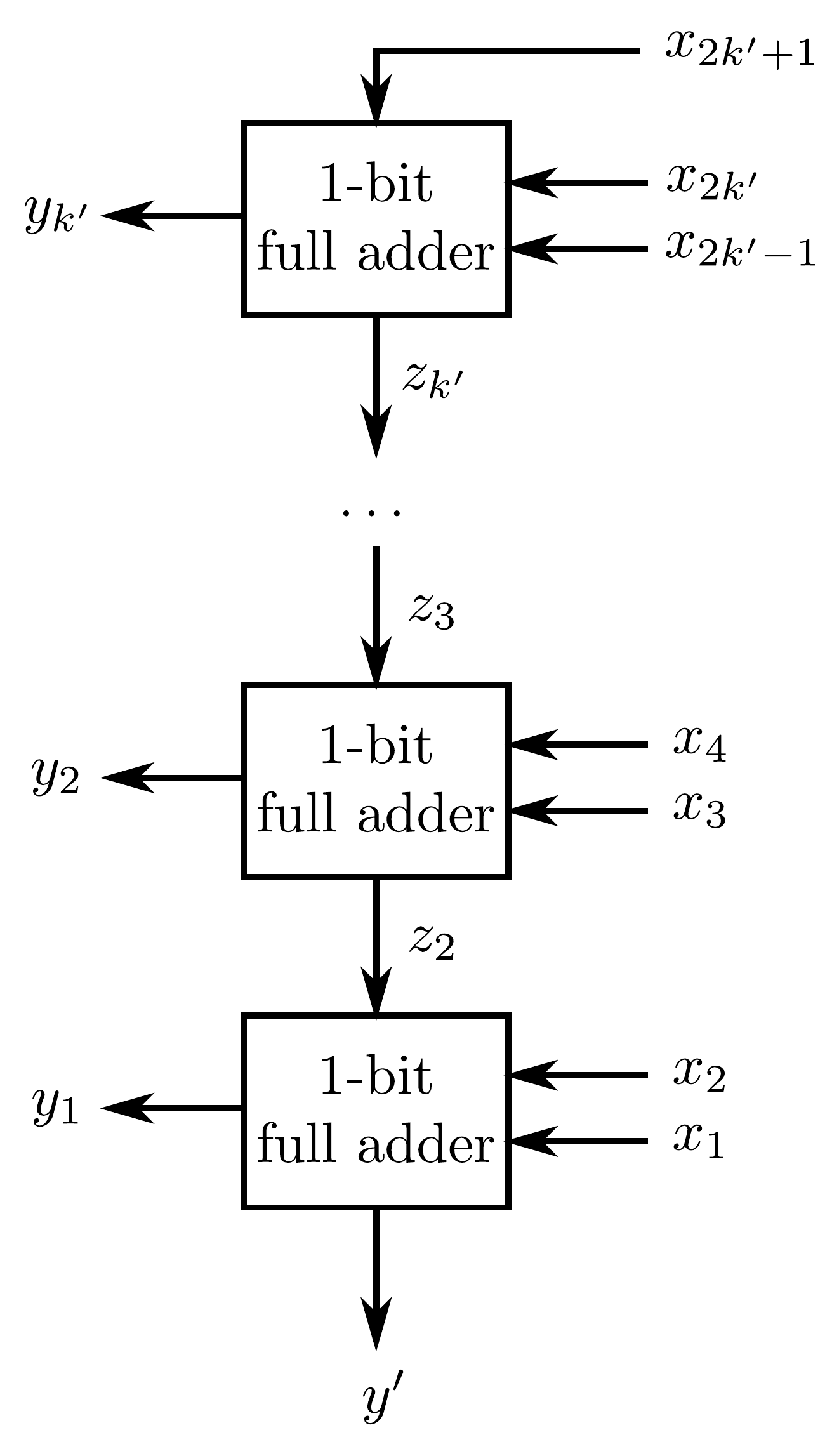}
    \caption{Chain of 1-bit full adders.}
    \label{fig:full-adder-chain}
  \end{subfigure}
  \caption{%
    Using adders to encode parities of subsets of variables.
  }
\end{figure}

\begin{description}
\item [Step 1a:] 
  To derive
  \eqref{eq:xor-recover}
  we will construct a chain of $1$-bit full adders, as illustrated in
  \reffig{fig:full-adder-chain}
  for an adder with output carry bit~$y$ and sum bit~$z$. 
  Let us start  by showing how
  the encoding of a single adder can be derived. 
  A $1$-bit full adder
  (shown in \reffig{fig:full-adder}) computes the sum of three variables
  $x_1, x_2, x_3$ and returns the result as a binary number. This can be
  encoded using the pseudo-Boolean equality
  \begin{equation}
    \label{eq:full-adder}
    2y + z = x_1 + x_2 + x_3
    \eqperiod 
  \end{equation}
  Recalling the shorthand~\refeq{eq:reification-notation}
  for the two reification constraints~\refeq{eq:reification-encoding-1}
  and~\refeq{eq:reification-encoding-2}, in order to
  obtain~\eqref{eq:full-adder} we start by deriving 
\begin{subequations}
\begin{align}
  y &\reifequiv x_1 + x_2 + x_3 \geq 2 
  \\
  z &\reifequiv x_1 + x_2 + x_3 - 2y \geq 1
\end{align}
\end{subequations}
for fresh variables~$y$ and~$z$
using \redbasedstrengthening as described in
Proposition~\ref{prop:reification-derivation}. 
This means that we have now derived the four constraints
\begin{subequations}
\begin{align}
  \label{eq:def-y-if}
  2\olnot{{y}} +
  {x}_1 + {x}_2 + {x}_3 
  &\geq 2 
  \\
  \label{eq:def-y-only-if} 
  2 y+
  \olnot{{x}}_1 + \olnot{{x}}_2 + \olnot{{x}}_3 
  &\geq 2 
  \\
  \label{eq:def-z-if}
  3 \olnot{{z}}  + 
  {x}_1 + {x}_2 + {x}_3 + 2\olnot{{y}} 
  &\geq 3 
  \\
  \label{eq:def-z-only-if}
  3 {z} +
  \olnot{{x}}_1 + \olnot{{x}}_2 + \olnot{{x}}_3 + 2{y}
  &\geq 3                                                                
\end{align}
\end{subequations}
when written as pseudo-Boolean inequalities in normalized form.
To derive
the less-than-or-equal part
$
2y + z \leq x_1 + x_2 + x_3
$
of~\eqref{eq:full-adder},
which in normalized form is
\begin{subequations}
\begin{equation}
  \label{eq:full-adder-pb-leq}
  {x}_1 + {x}_2 + {x}_3 + 2\olnot{{y}} + \olnot{{z}} \geq 3
  \eqcomma
\end{equation}
we take a linear combination of~\eqref{eq:def-z-if} and $2$
times~\eqref{eq:def-y-if}, followed by division by~$3$.  
In a similar fashion,
to derive the greater-than-or-equal part
$
2y + z \geq x_1 + x_2 + x_3
$
of~\eqref{eq:full-adder}, or
\begin{equation}
  \label{eq:full-adder-pb-geq}
  \olnot{{x}}_1 + \olnot{{x}}_2 + \olnot{{x}}_3 + 2{y} + {z}
  \geq 3
\end{equation}
\end{subequations}
in normalized form,
we add together~\eqref{eq:def-z-only-if} and $2$
times~\eqref{eq:def-y-only-if} followed by division by~$3$. 

\newcommand{\kprime}{k'}

   \item[Step 1b:] 
     To derive the equality constraint~\eqref{eq:xor-recover}, we use a
     chain of $1$-bit full adders connected as in
     \reffig{fig:full-adder-chain}, where we set
     $k' = \floor{k/2}$.
     The
     $x_i$-variables are used as inputs to the adders, and the final
     variable~$x_{2 \kprime + 1}$, which appears in the
     topmost adder, will only be there if the number of variables~$k$
     is odd.
     Otherwise, we replace~$x_{2 \kprime + 1}$ by~$0$,
     so that the topmost adder only has $x_{2 \kprime}$ 
     and~$x_{2 \kprime - 1}$ as input. 
     (Formally, if $x_{2 \kprime + 1}$ does not exist, then it is a
     fresh variable,
     and so we can derive the equality~$x_{2 \kprime + 1} = 0$ by
     \redbasedstrengthening 
     before continuing as described below.)
     The output carry variables~$y_i$, $i \in [k']$,
     and the final sum bit~$y'$
     will be used to derive the equality~\eqref{eq:xor-recover}, while
     the $z_i$-variables
     are intermediate parity bits. 
     We apply the procedure in Step~1a to all adders to derive PB
     constraints on the form~\refeq{eq:full-adder-pb-leq}
     and~\refeq{eq:full-adder-pb-geq}.
     After this, for the topmost adder in
     \reffig{fig:full-adder-chain} we have obtained
     \begin{subequations}
       \begin{equation}
         \label{eq:top-adder}
         2y_{\kprime} + z_{\kprime} = x_{2{\kprime}+1} + x_{2{\kprime}}
         + x_{2{\kprime}-1}
         \eqcomma
       \end{equation}
       for the intermediate adders the equations
       \begin{equation}
         2y_i + z_i = z_{i+1} + x_{2i} + x_{2i - 1} \label{eq:middle-adder}
       \end{equation}
       hold for $i\in \set{2,\dots,{\kprime}-1}$,
       and for the bottom adder we get
       \begin{equation}
         2y_1 + y' = z_{2} + x_{2} + x_{1} \label{eq:middle-bottom}
         \eqperiod{}
       \end{equation}
     \end{subequations}
   By adding the encoding of all $1$-bit adders, i.e.,
   the equalities~\eqref{eq:top-adder}--\eqref{eq:middle-bottom}, we obtain
   \begin{equation}
     \label{eq:almost-xor}
    \sum_{i=1}^{\kprime} 2 y_i + y' + \sum_{i=2}^{\kprime} z_i 
    = 
    \sum_{i=2}^{\kprime} z_i + \sum_{i=1}^{2{\kprime} + 1} x_i
    \eqcomma
\end{equation}
where the sums $\sum_{i=2}^{\kprime} z_i$ on each side cancel to produce
the equality constraint~\eqref{eq:xor-recover}  as desired.

\item[Step 2:] 
  The final step is to fix the value of ${y'}$ in order to go
  from~\eqref{eq:xor-recover} to our final goal~\refeq{eq:xor-pb}.
  That is, writing
  $\yprimesigned = y'$
  if $b=1$
  and
  $\yprimesigned = \olnot{y}'$ if $b=0$,
  we wish to derive the PB constraint
  \begin{equation}
    \label{eq:yprime-unit-constraint}
    \yprimesigned \geq 1
  \end{equation}
  forcing $y'  = b$.
  We will do so by considering all possible truth value assignments~$\rho$
  to the variables 
  $x_i$, $i \in [k]$. In order to present the formal
  derivation, we first need to set up some notation.

  For any assignment~$\rho$ to a subset of the variables
  $x_i$, $i \in [k]$, 
  let 
  $\varsfalse{\rho}$
  and
  $\varstrue{\rho}$
  be the indices of variables set to false and true by~$\rho$,
  respectively, as defined in~\refeq{eq:vars-false} and~\refeq{eq:vars-true}.
  Let us write $\clnotass{\rho}$ to denote the unique clausal constraint
  \begin{equation}
    \label{eq:cl-not-assmt}
    \sum_{i \in \varsfalse{\rho}} x_i + 
    \sum_{j \in \varstrue{\rho}} \olnot{x}_j 
    \geq 1
  \end{equation}
  over all variables assigned  by~$\rho$ that is falsified by this assignment.
  We also extend this notation in the natural way to let
  $\clnotassrhoy$
  denote the clausal constraint
  \begin{equation}
    \label{eq:cl-not-assmt-incl-y}
    \yprimesigned + 
    \sum_{i \in \varsfalse{\rho}} x_i + 
    \sum_{j \in \varstrue{\rho}} \olnot{x}_j 
    \geq 1
  \end{equation}
  that is falsified by~$\rho$ if in addition $\yprimesigned$ is
  set to false,
  i.e., 
  $y'$ is given the value $1 - b$.

  For any assignment~$\rho$ such that
  $\sum_{i \in [k]} x_{i} \neq b \ (\bmod\ 2)$,
  we postulated above that
  the clause $\clnotass{\rho}$ is in the formula,
  but for our argument here we only need the slightly weaker
  assumption that this clause can be obtained by reverse unit
  propagation on the constraints derived so far.
  Assuming that this holds, we can certainly derive~$\clnotassrhoy$
  by RUP for all such assignments~$\rho$.
  If instead $\rho$ is such that
  $\sum_{i \in [k]} x_{i} = b \ (\bmod\ 2)$,
  then extending~$\rho$ by setting
  $\yprimesigned = 0$
  means that~\eqref{eq:xor-recover} can no longer be satisfied, since
  $\rho$~assigns different parities to the left-hand and right-hand
  sides of this equality, and no assignment to the $y_i$-variables in
  $
  \sum_{{i}\in[\lfloor{k}/2\rfloor]} 2 y_{i}
  $
  can change this.
  For such~$\rho$ we can therefore proceed as in
  \refsec{sec:proof-logging-clause-generation}
  to derive the clause~$\clnotassrhoy$
  explaining why the assignment
  $\rho \,\cup\, \set{\yprimesigned \mapsto 0}$
  is inconsistent.

  So far, we have shown how to derive clauses~$\clnotassrhoy$
  in~\refeq{eq:cl-not-assmt-incl-y}  
  for any assignment~$\rho$ to all the  $x_i$-variables.
  But once we have these clauses, the rest is routine.  Let
  $\rho_{k-1}$ be any partial
  assignment to the $k-1$ first variables  $x_i$, $i \in [k-1]$.
  Taking the previously derived constraints
  $\clnotassextxy{\rho_{k-1}}{x_k}{0}$
  and~$\clnotassextxy{\rho_{k-1}}{x_k}{1}$,
  which is what we write by mild abuse of notation to denote the
  clausal   constraints
  \begin{subequations}
    \begin{align}
    \label{eq:cl-not-assmt-incl-y-xk-false}
    \yprimesigned + x_k + 
    \sum_{i \in \varsfalse{\rho_{k-1}}} x_i + 
    \sum_{j \in \varstrue{\rho_{k-1}}} \olnot{x}_j 
    &\geq 1
    \\
    \shortintertext{and}
    \label{eq:cl-not-assmt-incl-y-xk-true}
    \yprimesigned + \olnot{x}_k + 
    \sum_{i \in \varsfalse{\rho_{k-1}}} x_i + 
    \sum_{j \in \varstrue{\rho_{k-1}}} \olnot{x}_j 
    &\geq 1
      \eqcomma
  \end{align}
\end{subequations}
  respectively
  (which agree on all literals except that the variable~$x_k$ appears
  with opposite signs),
  adding these constraints, and then dividing by~$2$ yields
  \begin{equation}
    \label{eq:cl-not-assmt-xk-eliminated}
    \yprimesigned +
    \sum_{i \in \varsfalse{\rho_{k-1}}} x_i + 
    \sum_{j \in \varstrue{\rho_{k-1}}} \olnot{x}_j 
    \geq 1
    \eqcomma
\end{equation}
i.e., the clause~$\clnotassrhoy[\rho_{k-1}]$.%
\footnote{For readers knowledgeable in proof complexity, what we are
  doing here is just the cutting planes simulation of 
  a resolution step   resolving the
  two clauses~\refeq{eq:cl-not-assmt-incl-y-xk-false}
  and~\refeq{eq:cl-not-assmt-incl-y-xk-true}
  over~$x_k$
  to obtain the clause~\refeq{eq:cl-not-assmt-xk-eliminated}.
  And, jumping ahead a bit, the whole derivation presented here is an
  adaptation of the  standard resolution derivation of contradiction from the 
  $2^k$ clauses~$\clnotass{\rho}$
  for all assignments~$\rho$ to a set of $k$~variables.}
We can eliminate the variable~$x_k$ in this way
by deriving clauses~\refeq{eq:cl-not-assmt-xk-eliminated}
for all assignments  $\rho_{k-1}$ to  $x_i$, $i \in [k-1]$.
(A technical side note is that
the constraint~\refeq{eq:cl-not-assmt-xk-eliminated} follows by
reverse unit propagation  
on~\refeq{eq:cl-not-assmt-incl-y-xk-false}
and~\refeq{eq:cl-not-assmt-incl-y-xk-true},
and so we could avoid a syntactic derivation by just claiming it as a
RUP constraint.
However, when there is a simple explicit derivation like above it is
often preferable to use such a derivation instead, since this tends to make
proof verification faster, and as we will see in \refsec{sec:example}
there is an elegant way of chaining all derivations of this type
together on a single proof line.)

Next, we consider all assignments  $\rho_{k-2}$ to  $x_i$, $i \in [k-2]$,
and repeat the derivation of clauses~\refeq{eq:cl-not-assmt-xk-eliminated}
from~\refeq{eq:cl-not-assmt-incl-y-xk-false}
and~\refeq{eq:cl-not-assmt-incl-y-xk-true}
to obtain clauses~$\clnotassrhoy[\rho_{k-2}]$
for all~$\rho_{k-2}$
(where we replace $x_k$ by~$x_{k-1}$
in~\refeq{eq:cl-not-assmt-incl-y-xk-false}
and~\refeq{eq:cl-not-assmt-incl-y-xk-true}).
Continuing in this fashion, we eliminate the variables
$
x_{k},
x_{k-1},
\ldots ,
x_{1}
$
one by one,
until the process terminates with the desired
constraint~\refeq{eq:yprime-unit-constraint}.
Since we also know
$\yprimesigned \leq 1$
(which is a literal axiom),
we now have the equality
$y' = b$ in~\refeq{eq:y-prime-equals-b},
so that we can add together~\eqref{eq:xor-recover}
and~\refeq{eq:y-prime-equals-b}.
\end{description}

This concludes our derivation of the pseudo-Boolean
encoding~\eqref{eq:xor-pb} of the 
XOR constraint~\eqref{eq:xor}
from a CNF encoding.

\newcommand{\cidno}[1]{\text{\textsc{(id: #1)}}}

\section{A Worked-Out Proof Logging Example}
\label{sec:example}

In this section, we present a concrete (toy) application
of the methods developed in \refsec{sec:proof-logging},
using this example to also illustrate the syntax used in 
\jncitenameurl{\veripb}{VeriPB}{}
proof logging files.

Suppose that we have a CNF formula with two 
parity constraints 
$\varx_1 \oplus \varx_2 \oplus \varx_3 = 0$
and
$\varx_2 \oplus \varx_3 \oplus \varx_4 = 1$, 
which are encoded as sets of clauses
\begin{subequations}
  \begin{equation}
    \label{eq:parity-clauses-123}
    \set{
      \olnot{x}_1 \lor x_2 \lor x_3 
      , \ 
      x_1 \lor \olnot{x}_2 \lor x_3 
      , \ 
      x_1 \lor x_2 \lor \olnot{x}_3 
      , \ 
      \olnot{x}_1 \lor \olnot{x}_2 \lor \olnot{x}_3 
    }
  \end{equation}
  and
  \begin{equation}
    \label{eq:parity-clauses-234}
    \set{
      x_2 \lor x_3 \lor x_4 
      , \
      x_2 \lor \olnot{x}_3 \lor \olnot{x}_4 
      , \
      \olnot{x}_2 \lor x_3 \lor \olnot{x}_4 
      , \ 
      \olnot{x}_2 \lor \olnot{x}_3 \lor x_4 
    }
    \eqcomma
  \end{equation}
\end{subequations}
respectively.
To present this formula to \veripb, we write the constraints in
pseudo-Boolean form in an input file as  
\begin{Verbatim}[commandchars=\\\{\}]
* #variable= 4 #constraint= 8
+1 \verbtilde{x1} +1  x2 +1  x3 >= 1 ;
+1  x1 +1 \verbtilde{x2} +1  x3 >= 1 ;
+1  x1 +1  x2 +1 \verbtilde{x3} >= 1 ;
+1 \verbtilde{x1} +1 \verbtilde{x2} +1 \verbtilde{x3} >= 1 ;
+1  x2 +1  x3 +1  x4 >= 1 ;
+1  x2 +1 \verbtilde{x3} +1 \verbtilde{x4} >= 1 ;
+1 \verbtilde{x2} +1  x3 +1 \verbtilde{x4} >= 1 ;
+1 \verbtilde{x2} +1 \verbtilde{x3} +1  x4 >= 1 ;
\end{Verbatim}
using the standard OPB file
format~\cite{RM16OPB}.%
\footnote{In fact, \veripb uses a slight extension of the OPB format,
  which among other things provides greater flexibility in choosing
  variable names,  but since this is not really relevant for this
  discussion we ignore   such details.}
In the \emph{proof log} file presented to the \veripb verifier, the start
of the file
\begin{Verbatim}[commandchars=\\\{\}]
pseudo-Boolean proof version 1.1
f 8
\end{Verbatim}
instructs the verifier to read this input file, and to expect to see
$8$~constraints. The verifier maintains a database of pseudo-Boolean
constraints, and keeps track of the constraints in the database by
assigning to each constraint a  \emph{constraint identifier}, which is
a positive integer. 
Upon reading the input file above, the verifier
will assign the pseudo-Boolean constraints in the file identifiers~$1$
through~$8$ and store them in the constraints database.

When the SAT solver execution starts, the solver
reads the formula consisting of all of these clauses from file,%
\footnote{Although the SAT solver would instead expect its input to be
  formatted according to the standard 
  \jncitenameurl{DIMACS format}{DIMACSformatCNF}{}
  used in the 
  \jncitenameurlpunctuation
  {SAT competitions}
  {SATcompetition}
  {}
  {.}
  Translating a
  CNF formula from DIMACS format to OPB format is a simple syntactic
  operation, and we ignore this detail here.
}
and then runs an algorithm to detect clausal encodings of parities.
Once the SAT solver detects a parity constraint, it generates a derivation
of the pseudo-Boolean encoding of this constraint and writes it to
the proof log file.
For this translation from CNF to pseudo-Boolean form  it is necessary
to introduce fresh variables using \redbasedstrengthening.
For each application of the \redbasedstrengthening rule, 
the proof log will contain a line of the form
\begin{Verbatim}[commandchars=\\\{\}]
red [constraint C] ; [assignment omega]
\end{Verbatim}
where 
\verb+red+
identifies the line as a
\redbasedstrengthening step,
followed by the constraint $\constrc{}$ to be added and the witness
substitution~$\mapassmnt{}$. 
The substitution~$\mapassmnt$ is specified by listing each variable in
the domain of~$\mapassmnt$ followed by the value or literal it should
be substituted by, optionally separated by 
``\verb+->+''.

The translation of the clausal encoding of the parity 
$\varx_1 \oplus \varx_2 \oplus \varx_3 = 0$
into  linear pseudo-Boolean form  starts with the 
reification $\vary_1 \reifequiv \varx_1
+ \varx_2 + \varx_3 \geq 2$ for the fresh variable~$\vary_1$. 
The two PB constraints in the reification 
(which are of the form~\refeq{eq:reification-example-encoding-1}
and~\refeq{eq:reification-example-encoding-2})
are derived by the two lines
\begin{Verbatim}[commandchars=\\\{\}]
red +2 \verbtilde{y1} +1  x1 +1  x2 +1  x3 >= 2 ; y1 -> 0
red +2  y1 +1 \verbtilde{x1} +1 \verbtilde{x2} +1 \verbtilde{x3} >= 2 ; y1 -> 1
\end{Verbatim}
in the proof file, 
which can be checked using Algorithm~\ref{alg:PBRAT} as 
shown in Proposition~\ref{prop:reification-derivation}.
After the verifier has succeeded in validating these
\redbasedstrengthening steps, it adds the new constraints
\begin{subequations}
\begin{align}
    \cidno{9}
  && 
     2 \olnot{y}_1 +  x_1 +   x_2 +   x_3 
  &\geq 2
  \\
  \cidno{10}
  && 
     2  y_1 + \olnot{x}_1 + \olnot{x}_2 + \olnot{x}_3
  &\geq 2
\end{align}
\end{subequations}
to the constraint database (where we note that the constraints are
assigned identifiers~$9$ and~$10$). 
In a completely analogous fashion, 
the reification
$\vary_2 \reifequiv \varx_1 + \varx_2 + \varx_3 - 2\vary_1 \geq 1$ 
can be derived by the lines
\begin{Verbatim}[commandchars=\\\{\}]
red +3 \verbtilde{y2} +1  x1 +1  x2 +1  x3 +2 \verbtilde{y1} >= 3 ; y2 -> 0
red +3  y2 +1 \verbtilde{x1} +1 \verbtilde{x2} +1 \verbtilde{x3} +2  y1 >= 3 ; y2 -> 1
\end{Verbatim}
in the proof log, 
which adds the constraints
\begin{subequations}
\begin{align}
    \cidno{11}
  && 
     3 \olnot{y}_2  +  
     x_1 +   x_2 +   x_3   +   2 \olnot{y}_1 
  &\geq 3
  \\
  \cidno{12}
  && 
     3 y_2  +  
     \olnot{x}_1 +   \olnot{x}_2 +   \olnot{x}_3   +  2 y_1 
  &\geq 3
\end{align}
\end{subequations}
to the database of the verifier.

Once these proof logging steps have been performed, 
the variables $\vary_1$ and~$\vary_2$ correspond to the
output carry and sum bit, respectively, of
a single full adder with inputs $x_1, x_2, x_3$.
Since the parity is only over three variables in this example,
we do not need to derive a chain of multiple full adders as explained in
\refsec{sec:proof-logging-translation}.
It is important to note that the SAT solver database will not contain
any of these constraints---indeed, the CDCL algorithm does not even know
what a ``pseudo-Boolean constraint'' is---but that the PB
constraints only exist in the verifier constraint database for proof
logging purposes. However, it is important that the verifier maintains
a separate name space for auxiliary variables like $y_1$ and~$y_2$, so
that the SAT solver will not try to use the same variables for any
preprocessing or inprocessing steps. If this happens, this will most
likely result in an incorrect proof, making the verifier reject.

\DefineShortVerb{|}
The next step 
in our proof logging example
is to combine the constraints we just derived with
identifiers~$9$ through~$12$ via a sequence of cutting planes rule
applications. The version of cutting planes in \veripb is slightly
different from  
(but equivalent to) what we described in \refsec{sec:prelims} in that
the derivation rules are addition, scalar multiplication, and
division.%
\footnote{And there is also an additional saturation rule, just as
  described in \refsec{sec:prelims}, but we will not read the
  saturation rule for  this proof logging example.}
Such operations are written in postfix notation 
(also known as reverse polish notation)
in the following way:
\begin{itemize}
\item 
  To use a literal axiom
  $y \geq 0$
  or
  $\olnot{y} \geq 0$,
  we simply write
  ``\Verb|y|''
  or
  ``\Verb[commandchars=\\\{\}]|\verbtilde{y}|'',
  respectively.

\item 
  To add two constraints with identifiers
  $\mathit{id1}$
  and
  $\mathit{id2}$, 
  we write
  ``\verb|id1 id2 +|''.
\item 
  To multiply a constraint with identifier
  $\mathit{id}$
  by a positive integer $c$,   we write
  ``\verb|id c *|''.
\item 
  To divide  a constraint 
  $\mathit{id}$
  by a positive integer $c$,   we write
  ``\verb|id c d|''.
\end{itemize}
Arbitrary combinations of such derivation steps can be
performed using the 
reverse polish notation rule
in \veripb,
written on a line in the proof log prefixed by \verb+p+
(or~\verb+pol+),
where the semantics is that
any operands (constraint identifiers or factors/divisors) are
pushed on a stack, and operators pop the top two elements from this
stack and then push back the result of the operation. The final constraint
resulting from a sequence of operations is stored with the next
available constraint identifier number. In our example, the next lines
in the proof log will be
\begin{Verbatim}[commandchars=\\\{\}]
p 11  9 2 * + 3 d
p 12 10 2 * + 3 d
\end{Verbatim}
where
the first line starts with constraint number~$11$ and adds $2$~times
the constraint~$9$, after which the result is divided by $3$ (and
rounded up). The same operations are done in the second line but with the
constraints with identifiers $12$ and~$10$. The two lines derive the constraints
\begin{subequations}
\begin{align}
    \cidno{13} && \varx_1 + \varx_2 + \varx_3 + 2 \olnot{\vary}_1 + \olnot{\vary}_2 &\geq 3
    \label{eq:example_pre_parity_1}\\
    \cidno{14} && \olnot{\varx{}}_1 + \olnot{\varx{}}_2 + \olnot{\varx{}}_3 + 2 \vary_1 + \vary_2 &\geq 3
    \label{eq:example_pre_parity_2}
\end{align}
\end{subequations}
encoding an equality of the form~\eqref{eq:xor-recover}.

The inequalities in 
\eqref{eq:example_pre_parity_1}--\eqref{eq:example_pre_parity_2} 
do not yet enforce any parity
constraint on the variables $x_1, x_2, x_3$, 
since the fresh variables
$\vary_1$ and~$\vary_2$ 
are unconstrained and can be made to satisfy the constraints for any
values assigned to $x_1, x_2, x_3$.
To address this, we need to fix the value 
of~$\vary_2$, which can be done by generating a brute-force
derivation of
$\olnot{\vary{}}_2 \geq 1$
as described in \refsec{sec:proof-logging-translation}.
Since in our example we are dealing only with \mbox{$3$-XORs}, 
i.e., parity constraints over only three variables, we can take
a little shortcut when generating the missing clausal
constraints of the form~\refeq{eq:cl-not-assmt-incl-y}.
If we assign the variables $x_1, x_2, x_3$ so that the parity is even
but set $y_2=1$, then the 
constraints~\eqref{eq:example_pre_parity_1}--\eqref{eq:example_pre_parity_2} 
will propagate to contradiction since there is only a single
variable~$y_1$ left. This means that we can use
reverse unit propagation steps
\begin{Verbatim}[commandchars=\\\{\}]
rup +1 \verbtilde{y2} +1  x1 +1  x2 +1  x3 >= 1 ;
rup +1 \verbtilde{y2} +1  x1 +1 \verbtilde{x2} +1 \verbtilde{x3} >= 1 ;
rup +1 \verbtilde{y2} +1 \verbtilde{x1} +1  x2 +1 \verbtilde{x3} >= 1 ;
rup +1 \verbtilde{y2} +1 \verbtilde{x1} +1 \verbtilde{x2} +1  x3 >= 1 ;
\end{Verbatim}
to derive the clausal constraints that we need
(where each \verb+rup+-line claims that adding the negation
of the specified constraint
as in~\refeq{eq:negation}
to the current database will cause unit propagation to contradiction,
which is checked by the verifier before the constraint is added to the
database), 
and we list below these new constraints~$15$--$18$ together with the
relevant input constraints 
\begin{subequations}
\begin{align}
  \label{eq:brute-force-clause-1}
  \cidno{15}
  &&
     \olnot{y}_2 +  x_1 +  x_2 +  x_3 &\geq 1 
  \\
  \label{eq:brute-force-clause-2}
  \cidno{3}
  &&
     x_1 +  x_2 + \olnot{x}_3  &\geq 1
  \\
  \label{eq:brute-force-clause-3}
  \cidno{2}
  &&
     x_1 + \olnot{x}_2 +  x_3    &\geq 1
  \\
  \label{eq:brute-force-clause-4}
  \cidno{16}
  &&
     \olnot{y}_2 +  x_1 + \olnot{x}_2 + \olnot{x}_3  &\geq 1
  \\
  \label{eq:brute-force-clause-5}
  \cidno{1}
  &&
     \olnot{x}_1 +  x_2 +  x_3 &\geq 1
  \\
  \label{eq:brute-force-clause-6}
  \cidno{17}
  &&
     \olnot{y}_2 + \olnot{x}_1 +  x_2 + \olnot{x}_3 &\geq 1
  \\
  \label{eq:brute-force-clause-7}
  \cidno{18}
  &&
     \olnot{y}_2 + \olnot{x}_1 + \olnot{x}_2 +  x_3 &\geq 1
  \\
  \label{eq:brute-force-clause-8}
  \cidno{4}
  &&
     \olnot{x}_1 + \olnot{x}_2 + \olnot{x}_3 &\geq 1
\end{align}
\end{subequations}
to get an overview of the clauses involved in the derivation
fixing~$y_2$ to false. 
In the proof log, we can write a single long \verb+p+-line
\begin{Verbatim}[commandchars=\\\{\}]
p 15 3 + 2 d 2 16 + 2 d + 2 d 1 17 + 2 d 18 4 + 2 d + 2 d + 2 d
\end{Verbatim}
to implement the procedure described at the end of
Step~2 in 
\refsec{sec:proof-logging-translation}.
repeating derivations of the
clause~\refeq{eq:cl-not-assmt-xk-eliminated}
from~\refeq{eq:cl-not-assmt-incl-y-xk-false}
and~\refeq{eq:cl-not-assmt-incl-y-xk-true}
for partial assignments over subsets of variables of decreasing size.
First, the variable~$x_3$ is eliminated by performing addition
followed by division by~$2$  
for 
the clause pair~\refeq{eq:brute-force-clause-1}
and~\refeq{eq:brute-force-clause-2},
the pair~\refeq{eq:brute-force-clause-3}
and~\refeq{eq:brute-force-clause-4},
the pair~\refeq{eq:brute-force-clause-5}
and~\refeq{eq:brute-force-clause-6},
and the pair~\refeq{eq:brute-force-clause-7}
and~\refeq{eq:brute-force-clause-8},
respectively.
This yields four new clauses, for which addition followed by division
is performed in the same order to eliminate~$x_2$.
In the final step, the two clauses
$\olnot{y}_2  + x_1 \geq 1 $
and~$\olnot{y}_2 + \olnot{x}_1 \geq 1 $
are added and the result divided by~$2$ to yield the clause
\begin{align}
  \label{eq:y2-false}
  \cidno{19}
  &&
     \olnot{y}_2  &\geq 1 
\end{align}
as desired. (A further slight optimization could be to only add the
clauses together, without any intermediate division steps, and then
finally divide by a large enough number---the number of clauses
involved in the brute-force derivation will always be enough---but we
opted here for keeping all intermediate constraints clausal for
simplicity.) 

The constraint~\refeq{eq:y2-false}
can then be added to~\eqref{eq:example_pre_parity_2}
to remove $\vary_2$.
To eliminate~$\vary_2$ from~\eqref{eq:example_pre_parity_1} we can simply
use the literal axiom $\vary_2 \geq 0$,
which as mentioned above is referred to as ``\verb|y2|'' in the
\verb+p+-rule. 
Repeating this in formal notation, the proof lines
\begin{Verbatim}[commandchars=\\\{\}]
p 13 y2 +
p 14 19 +
\end{Verbatim}
derive the inequalities
\begin{subequations}
\begin{align}
    \cidno{20} && \varx_1 + \varx_2 + \varx_3 + 2 \olnot{\vary}_1 &\geq 2
    \label{eq:example_parity_1}\\
    \cidno{21} && \olnot{\varx{}}_1 + \olnot{\varx{}}_2 + \olnot{\varx{}}_3 + 2 \vary_1 &\geq 3
    \label{eq:example_parity_2}
\end{align}
\end{subequations}
encoding an equality
$\varx_1 + \varx_2 + \varx_3 = 2 \vary_1$
of the form~\eqref{eq:xor-pb}.

For the second parity constraint
$\varx_2 \oplus \varx_3 \oplus \varx_4 = 1$, 
we perform analogous derivations steps
\begin{Verbatim}[commandchars=\\\{\}]
red +2 \verbtilde{y3} +1  x2 +1  x3 +1  x4 >= 2 ; y3 -> 0
red +2  y3 +1 \verbtilde{x2} +1 \verbtilde{x3} +1 \verbtilde{x4} >= 2 ; y3 -> 1
red +3 \verbtilde{y4} +1  x2 +1  x3 +1  x4 +2 \verbtilde{y3} >= 3 ; y4 -> 0
red +3  y4 +1 \verbtilde{x2} +1 \verbtilde{x3} +1 \verbtilde{x4} +2  y3 >= 3 ; y4 -> 1
p 24 22 2 * + 3 d
p 25 23 2 * + 3 d
\end{Verbatim}
to obtain
\begin{subequations}
\begin{align}
  \cidno{26} 
  &&
     \varx_2 + \varx_3 + \varx_4 + 2 \olnot{\vary}_3 + \olnot{\vary}_4 
  &\geq 3
    \label{eq:example_2nd_pre_parity_1}\\
  \cidno{27} 
  &&
     \olnot{\varx}_2 + \olnot{\varx}_3 + \olnot{\varx}_4 + 
     2 \vary_3 + \vary_4 
  &\geq 3
    \label{eq:example_2nd_pre_parity_2}
\end{align}
\end{subequations}
after which we fix~$y_4$ to true by writing
\begin{Verbatim}[commandchars=\\\{\}]
rup +1 y4 +1  x2 +1  x3 +1 \verbtilde{x4} >= 1 ;
rup +1 y4 +1  x2 +1 \verbtilde{x3} +1  x4 >= 1 ;
rup +1 y4 +1 \verbtilde{x2} +1  x3 +1  x4 >= 1 ;
rup +1 y4 +1 \verbtilde{x2} +1 \verbtilde{x3} +1 \verbtilde{x4} >= 1 ;
p 5 28 + 2 d 29 6 + 2 d + 2 d 30 7 + 2 d 8 31 + 2 d + 2 d + 2 d
\end{Verbatim}
yielding the constraint
\begin{align}
  \cidno{32}
  &&
     y_4  &\geq 1 
\end{align}
on the last line. 
We finally derive the pseudo-Boolean constraints
\begin{subequations}
\begin{align}
    \cidno{33}
  && 
     \varx_2 + \varx_3 + \varx_4 + 2 \olnot{\vary}_3
  &\geq 3
    \label{eq:example_parity_1_2}\\
  \cidno{34} 
  &&
     \olnot{\varx{}}_2 + \olnot{\varx{}}_3 +
     \olnot{\varx{}}_4 + 2 \vary_3 
  &\geq 2
  \label{eq:example_parity_2_2}
\end{align}
\end{subequations}
encoding the PB equality
$\varx_2 + \varx_3 + \varx_4 = 2 \vary_3 + 1$
by the derivation steps
\begin{Verbatim}[commandchars=\\\{\}]
p 26  32 +
p 27 \verbtilde{y4} + 
\end{Verbatim}
and it is straightforward to verify that the constraints
\refeq{eq:example_parity_1_2}--\refeq{eq:example_parity_2_2}
indeed enforce that the parity of the variables
$x_2, x_3, x_4$
is odd.
This concludes the proof logging done after detecting parities.
We remark that in the  implementation of SAT solving with Gaussian
elimination that we made for the purposes of the experiments in this paper, 
the detection of parities and the proof generation for
pseudo-Boolean constraints encoding such parities is done only once at the
start of the solver execution. In principle, however, similar
detection and derivation steps could also be performed later during
the solver search.

Suppose now that that the solver decides on the assignment 
$\varx_1 = 0$. 
Note that adding the two parity constraints
$\varx_1 \oplus \varx_2 \oplus \varx_3 = 0$
and
$\varx_2 \oplus \varx_3 \oplus \varx_4 = 1$
encoded by our input formula yields
\mbox{$\varx_1 \oplus \varx_4 = 1$},
and hence $\varx_4$ should propagate to $1$.
This will be detected when the XOR propagator runs Gaussian
elimination.  

In order to justify this propagation, in the proof file
the solver first needs to derive the new parity constraint by adding
pairwise the pseudo-Boolean inequalities encoding the original parity
constraints, which is done by inserting the lines
\begin{Verbatim}[commandchars=\\\{\}]
p 20 33 +
p 21 34 +
\end{Verbatim}
producing the new constraints
\begin{subequations}
\begin{align}
\label{eq:new-parity-1}
\cidno{35} && 
\varx_1 + 
\varx_4 + 
2\varx_2 + 2\varx_3 +
2\olnot{\vary}_1  + 2 \olnot{\vary}_3 &\geq 5 
\\
\label{eq:new-parity-2}
\cidno{36} && 
\olnot\varx_1 + 
\olnot\varx_4 + 
2\olnot\varx_2 + 2\olnot\varx_3  +
2{\vary}_1  + 2 {\vary}_3 
&
\geq 5
\end{align}
\end{subequations}
that imply $\varx_1 \oplus \varx_4 = 1$ by
the observation made in
in \refsec{sec:proof-logging-pb-encoding}.

Once the constraints
\refeq{eq:new-parity-1}--\refeq{eq:new-parity-2}
have been added to the constraints database of the verifier,
the solver also needs to provide a proof that
the reason clause 
${\varx_1 + \varx_4 \geq 1}$ 
provided by the XOR propagator is valid.
The assignment falsifying this reason clause is 
$\s{\rho} = \set{\varx_1\mapsto 0, \varx_4 \mapsto 0}$.
Following the approach in
\refsec{sec:proof-logging-clause-generation},
we  derive $\varx_1 + \varx_4 \geq 0$ and add to
constraint~$36$ in~\refeq{eq:new-parity-2}
to get
$
2\olnot\varx_2 + 2\olnot\varx_3  + 2{\vary}_1  + 2 {\vary}_3 
\geq 3
$,
after which division by~$2$ followed by multiplication by~$2$ yields
$
2\olnot\varx_2 + 2\olnot\varx_3  + 2{\vary}_1  + 2 {\vary}_3 
\geq 4
$.
If we add
constraint~$35$ in~\refeq{eq:new-parity-1} to this, then
the terms
$2\olnot\varx_2 + 2\olnot\varx_3  + 2{\vary}_1  + 2 {\vary}_3 $
and
$ 2\varx_2 + 2\varx_3 + 2\olnot{\vary}_1  + 2 \olnot{\vary}_3 $
cancel, leaving a constant~$8$, and so if we write the line
\begin{Verbatim}[commandchars=\\\{\}]
p 36 x1 x4 + + 2 d 2 * 35 +
\end{Verbatim}
in the proof log, then this yields the clause
\begin{align}
\label{eq:example-final-reason-clause}
\cidno{37} && 
\varx_1 + \varx_4 &\geq 1
\end{align}
proving that the propagation is valid.
Observe that in contrast to the other constraints derived in the proof
logging steps above, the reason clause
$\varx_1 \lor \varx_4 $
in~\refeq{eq:example-final-reason-clause}
is also stored in the SAT solver clause database and can can be used in the
ensuing CDCL search in the same way as any other clause in this
database.

Whenever the XOR propagator detects a propagation or conflict, 
the solver will need to write derivation steps analogous to the ones
leading to constraints
\refeq{eq:new-parity-1}--\refeq{eq:new-parity-2}
and
\refeq{eq:example-final-reason-clause}
to the proof file. After this, the
clause~\refeq{eq:example-final-reason-clause}
can be used either for propagation or as the starting point for CDCL
conflict analysis.

\section{Implementation and Evaluation}
\label{sec:evaluation}

We have extended the pseudo-Boolean proof format (\pbp)
\ifthenelse{\boolean{detectedJAIR}}
{of the \veripb tool%
  \footnote{\veripb is available
    at~\url{https://gitlab.com/MIAOresearch/VeriPB}.} 
  with a}
{of the \veripb tool~\cite{VeriPB} with a}
\redbasedstrengthening rule,
which the proof checker validates as described in
Algorithm~\ref{alg:PBRAT}, 
and have
implemented our proof logging approach for XOR reasoning in a
library 
together with an XOR engine
using Gaussian elimination 
$\bmod\  2$ 
to detect XOR propagations.%
\urldef{\xorengineurl}\url{https://gitlab.com/MIAOresearch/xorengine}%
\footnote{The code for the XOR engine is available at \xorengineurl{}.}
We integrated this library into
\ifthenelse{\boolean{detectedJAIR}}
{\minisat\footnote{\minisat is available at \url{http://minisat.se/}.}} 
{\minisat~\cite{MiniSat}}
to call the XOR propagation method every time
clausal propagation terminated.
If the library detects a propagation or
conflict, a callback is used to notify 
\minisat, but the reason clause is only generated
when needed in conflict analysis. 
This \emph{lazy reason generation} technique~\cite{SGM20Tinted}
is crucial to minimize the proof logging overhead,
since it avoids generating proofs 
for reasons that are not used.

In order to be able to compare to approaches using \drat, 
we have also implemented in our library  \drat proof logging for XOR
constraints as described \jnciteinorby{PR16DRATforXOR}.
We remark that we did not study the more recent \drat-based approach
\jnciteinorby{CH20SortingParity}, which combines long parity constraints by
sorting the involved literals, because it does not seem to be
applicable to the kind of formulas that are relevant for our
comparison with \drat. 
The parity constraints in the formulas we consider will only contain
few variables, or else the clausal encoding that we are looking for to
detect these parities will blow up the formulas exponentially.
Also, when we operate on intermediate parity constraints generated
during Gaussian elimination, such
parities are guaranteed to be sorted already.

In the results reported below, 
all running times were measured on an 
Intel  Core
i5-1145G7 @2.60GHz $\times$ 4 with a
memory limit of~8GiB, disk write speed of roughly 200 MiB/s, and read speed of
2 GiB/s. The used tools, benchmarks, data and evaluation scripts are
available at \url{https://doi.org/10.5281/zenodo.7083485}.

Importantly, our goal was not to 
study whether XOR reasoning is  useful or not---this has already been
investigated---but  
to provide efficient proof logging for such reasoning. Therefore, we focused on 
benchmarks from the 
\jncitenameurl{SAT competition}{SATcompetition}{}
from 2016 to 2020 that could be solved by
\minisat with our XOR propagator but not by 
\jncitenameurlpunctuation{\kissat}{Kissat}{http://fmv.jku.at/kissat/}{,}
the winner of the 2020    SAT  competition. There were 
$39$~such instances, and they could be solved in $0.01$ seconds on
average by \minisat{} with the XOR propagator.
With our new proof logging
the average running time increased to $0.02$~seconds and
unsatisfiability could be verified 
in $1.29$~seconds on average. For
\drat proof logging, on the other hand, the average solving time jumped
to $2.72$~seconds and verification 
took $1092$~seconds on average.

In order to get systematic measurements for the performance of our
new proof logging technique, we ran experiments on the so-called
\emph{Tseitin formulas}%
\footnote{Somewhat confusingly, and as can be seen from the instance 
  names in Table~\ref{tbl:proofs}, these formulas are sometimes also
  referred to as \emph{Urquhart formulas} in the applied SAT
  community, perhaps because 
  \jnciteNameRef{Urquhart87HardExamples}{Urquhart}
  was the first to establish
  strong hardness results for these formulas.
} 
introduced \jnciteinorby{Tseitin68ComplexityTranslated},
including some formula instances that
have been studied before 
in the applied SAT community
in the context of proof logging.
Tseitin formulas consist of large inconsistent sets of parity constraints,
and can thus be viewed as a worst case for XOR reasoning.
To the best of our knowledge, the shortest \drat proofs
for these formulas obtained so far%
\footnote{The proofs and instances can be found at
  \url{https://github.com/marijnheule/drat2er-proofs}.}
are based on hand-crafted so-called
\textit{propagation redundancy} (PR) proofs, which have been translated
to \drat using the tool \prtodrat~\cite{KRH18DRAT2ExRes}.
Table~\ref{tbl:proofs} shows the disk space required for the proofs of
Tseitin formulas 
\jnciteinorby{KRH18DRAT2ExRes}.
The
pseudo-Boolean
proofs obtained by \minisat{} with the XOR propagator are dramatically 
smaller than the 
\drat 
proofs 
produced by the same tool,
and the size of our 
\drat 
proofs are
similar to that of the best previously known 
\drat 
proofs.

\begin{table}[t]
    \centering
    \newcolumntype{R}{>{\raggedleft\arraybackslash}X}
    \begin{tabular}{lrrr}
    \hline
    Instance & \multicolumn{2}{c}{\minisat{} + XOR} & \prtodrat \\ 
             & (\pbp) & (\drat) & \\   
    \hline
    Urquhart-s5-b1 & 80.8 & 3033.1 & 3878.4 \\ 
    Urquhart-s5-b2 & 84.0 & 2844.4 & 3575.2 \\ 
    Urquhart-s5-b3 & 123.5 & 7584.0 & 7521.0 \\ 
    Urquhart-s5-b4 & 99.8 & 5058.6 & 5271.5 \\ 
    \hline
    \end{tabular}
    \caption{Proof sizes (KiB) for
   some previously studied
      Tseitin formulas.}
  \label{tbl:proofs}
\end{table}

\begin{figure}[t]
\centering
        \includegraphics[scale=0.8]{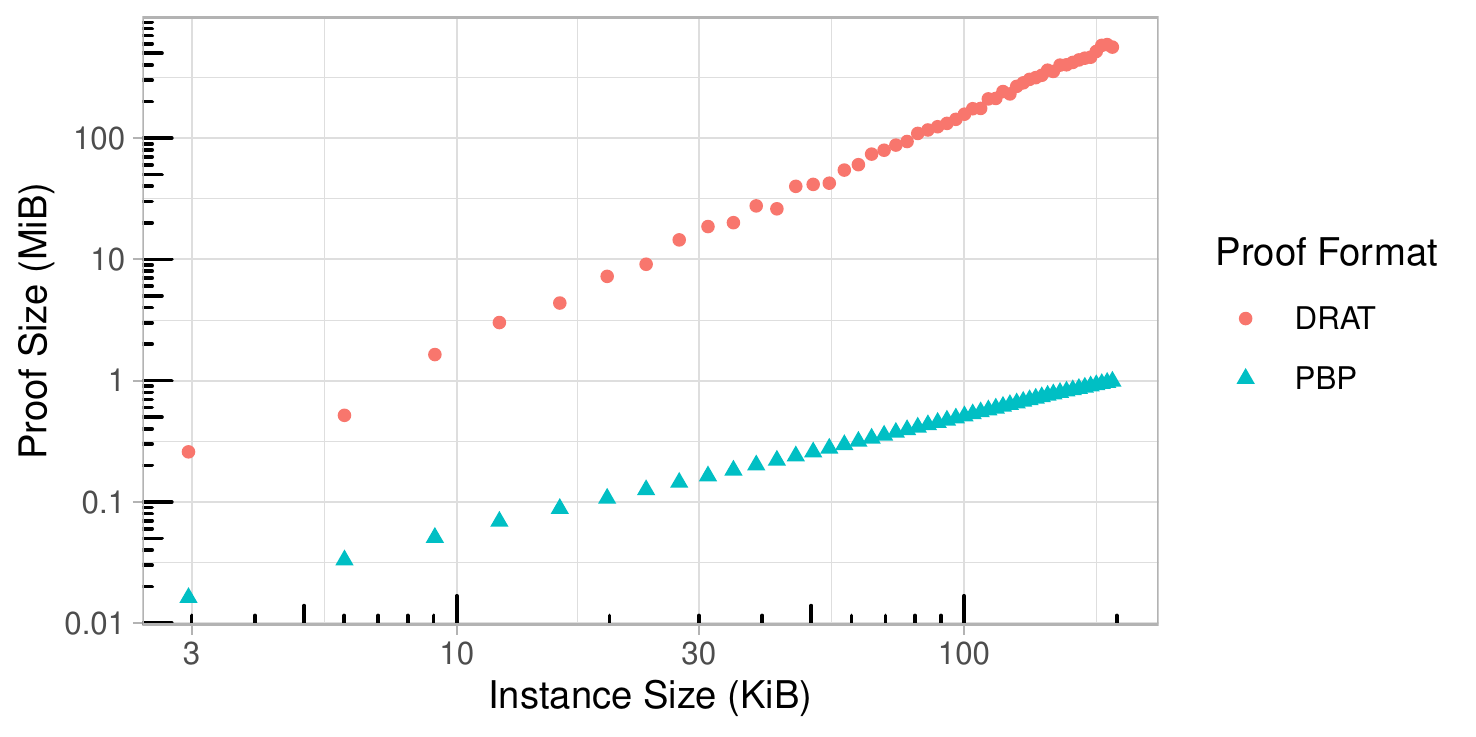}
        \caption{Proof sizes for larger Tseitin formulas
          using \drat and 
          PB
          proof logging.}
        \label{fig:tseitin_proof_size}
\end{figure}
The formulas in Table~\ref{tbl:proofs} contain only $50$~XOR
constraints over about $100$~variables, which is very small by modern
standards, and they are
solved and verified in less than one second.
To get a sense
of the asymptotic behaviour of the
proof logging, we 
also considered
$50$~new,
larger
Tseitin formulas with up to $500$~XORs
over
up to $1250$~variables. 
These formulas were 
generated with the tool
\cnfgen~\jnshortcite{LENV17CNFgen} using random 
regular graphs of 
degree~$5$,
which produces formulas with clausal encodings of
\mbox{$5$-XOR} constraints.
It was shown \jnciteinorby{Urquhart87HardExamples} that Tseitin formulas
are hard for \emph{resolution}, 
the  reasoning method
underlying conflict-driven clause learning,
if the graph 
from which the formula is generated   %
is an expander,
and it is well known
that random graphs are expanders 
with extremely high probability
(see, e.g.,~\cite{HLW06ExpanderGraphs}). Thus,
we can be confident that 
the generated formulas are hard for CDCL solvers and 
require additional reasoning methods, such as Gaussian elimination, 
to be solved efficiently.

In \reffig{fig:tseitin_proof_size} we compare
the proof size for \drat proof logging and 
our pseudo-Boolean \veripb proof logging.
Notice that
both proof logging approaches result in 
straight lines
in the log-log
plot, which is a strong indication
that they are both scaling polynomially. 
Studying the slopes of the lines yields the estimates that \drat
produces quadratic-size proofs while the proof size of the
pseudo-Boolean proof is linear in the size of the formula.
In \reffig{fig:tseitin_time} we compare the running time (system
time 
plus  %
user time) of solving and producing the proof, as well as time 
spent on proof verification
(where it can be noted that running times below one second should be
interpreted with some care since the running time might be dominated by start-up
overhead).   
It is clear that the larger proof size required for \drat proofs does
not only increase verification time, but also causes a clearly
increased time overhead during solving.

\begin{figure}[t]
    \centering
        \includegraphics[scale=0.8]{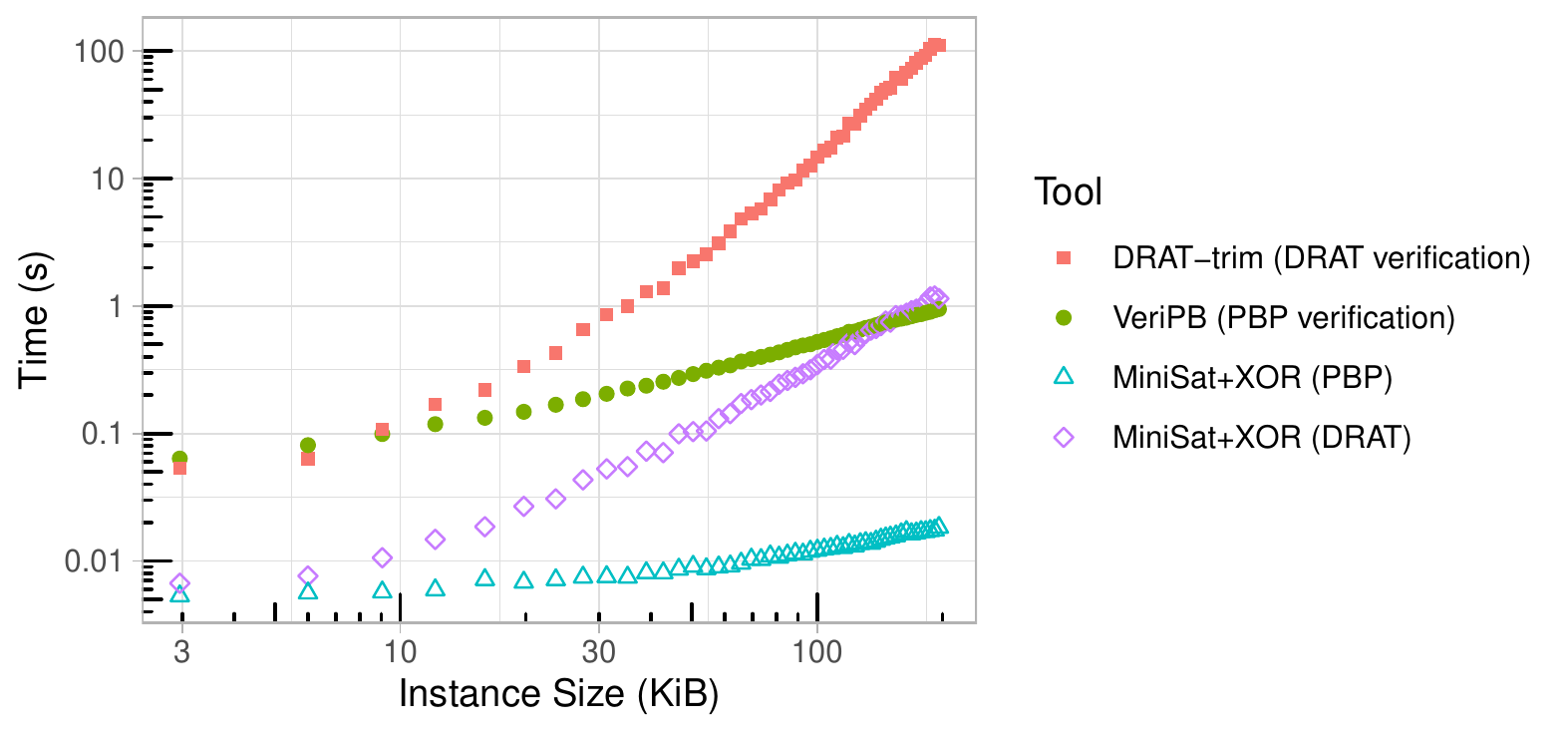}
        \caption{Solving and verification time for Tseitin formulas.}
        \label{fig:tseitin_time}
\end{figure}

\ifthenelse{\boolean{detectedJAIR}}
{
\begin{figure}[t]
\centering
        \includegraphics[scale=0.8]{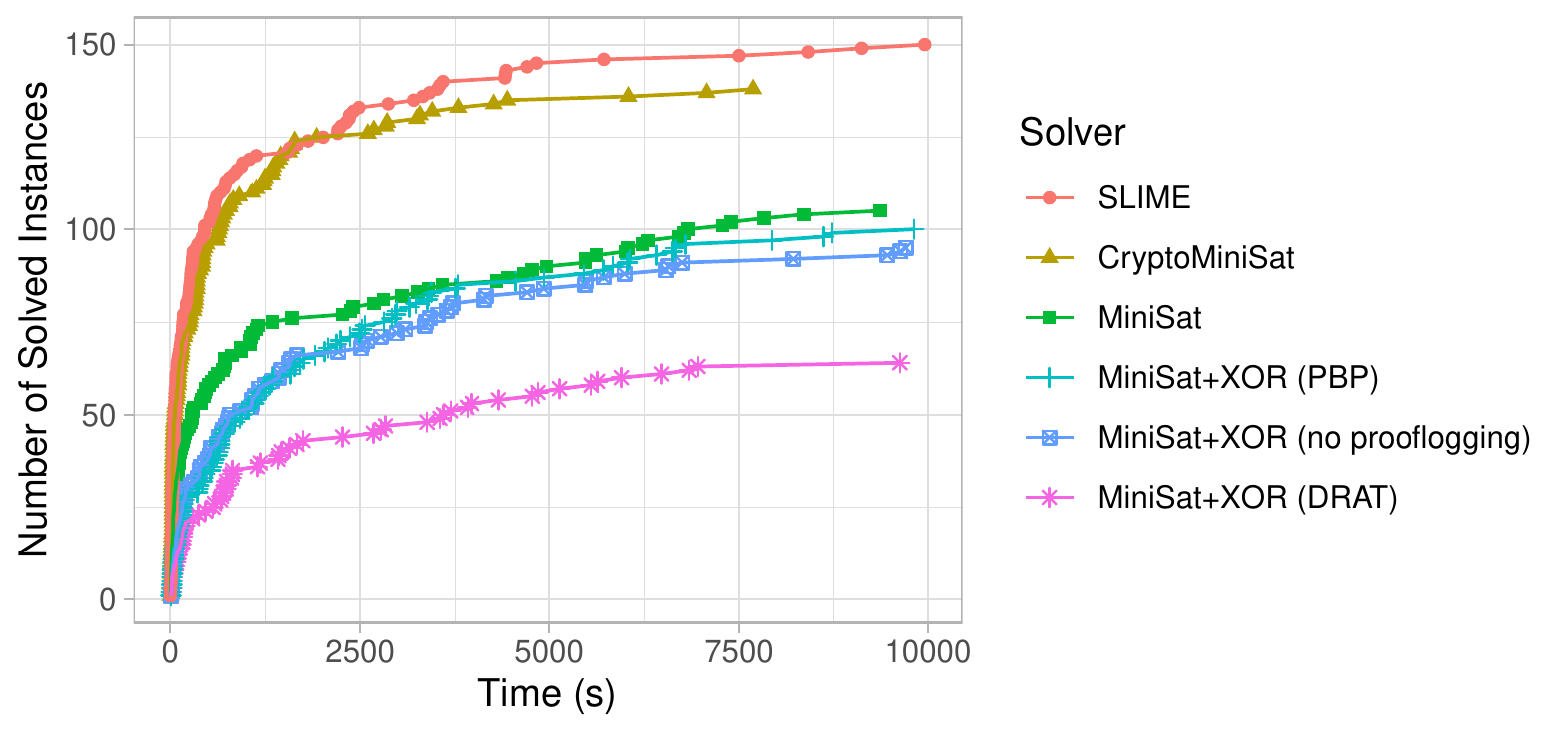}
        \caption{%
          Cumulative plot 
          for
          the crypto track of the SAT Competition 2021.
        } 
        \label{fig:crypto_comulative}
\end{figure}
}
{
\begin{figure}[t]
\centering
        \includegraphics[scale=0.8]{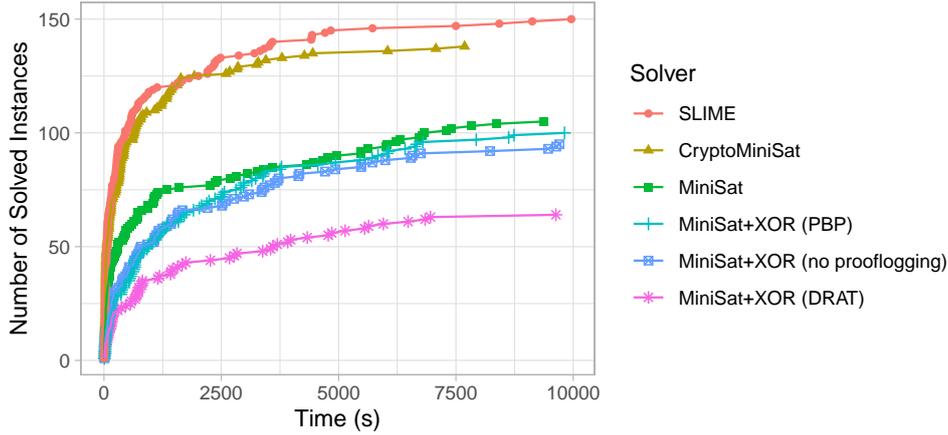}
        \caption{%
          Cumulative plot 
          for number of solved instances in
          the crypto track of the SAT Competition 2021.
        } 
        \label{fig:crypto_comulative}
\end{figure}
}

To get a wider range of
practically relevant formulas, 
we 
additionally
evaluated our tools on
cryptographic benchmarks, which often contain parity constraints, from the
crypto track of the 2021 SAT competition. 
\reffig{fig:crypto_comulative} compares the performance of different
solvers on this 
benchmark set, including 
\jncitenameurlpunctuation{\slime}{SLIME}{https://maxtuno.github.io/slime-sat-solver/}{,}
the winner of the crypto track,
and 
\jncitenameurlpunctuation{\cryptominisat}{CryptoMiniSat}{https://github.com/msoos/cryptominisat/}{,}
arguably the most well-established modern solver with 
integrated parity reasoning.
Notably, \slime and \cryptominisat 
significantly
outperform \minisat, showing the
advancements made over the last decades. Somewhat surprisingly, our
integration of parity reasoning does not seem to benefit \minisat on this set
of benchmarks. However, if one insists on
that the solver with parity reasoning should also support proof logging, 
then it is clear that more instances can be solved 
if we use pseudo-Boolean proof logging instead of \drat.
One reason for this
is that the proof sizes are much
larger when using \drat proof logging, as
 shown in 
\reffig{fig:crypto_proof_size}. The 
generated
\drat
proofs can quickly exceed the disk limit of
roughly 100GB, causing the 
SAT 
solver to terminate with an error. 

\begin{figure}[t]
\centering
        \includegraphics[scale=0.8]{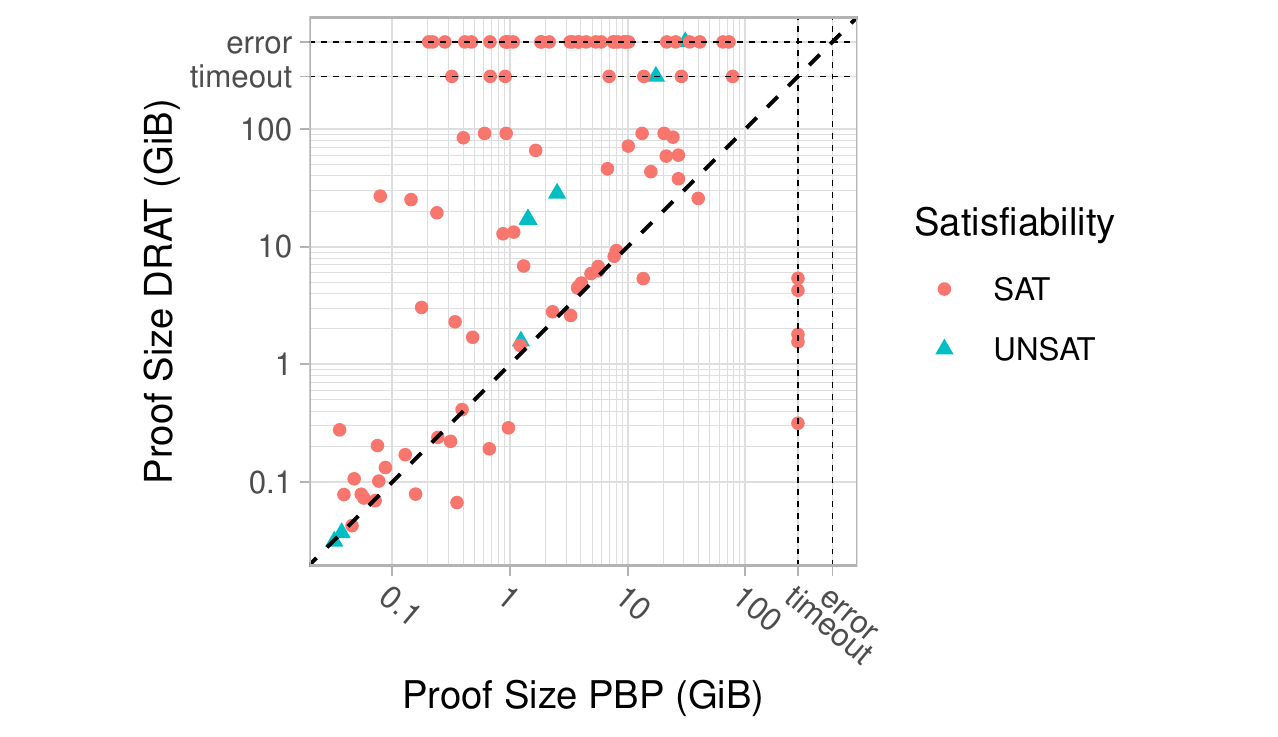}
        \caption{%
          \drat and PB and proof sizes for 
         the crypto track of the SAT Competition 2021.}
        \label{fig:crypto_proof_size}
\end{figure}

While the
tendency of 
the    %
plot 
in     %
\reffig{fig:crypto_proof_size} is clear, it should be
noted that the difference shown 
is due not only to different proof logging methods, 
but also to the particular way in which we implemented proof logging for
\minisat, in which the introduction of new variables for proof logging affects
the  \minisat search. This can be observed in different statistics such as
the number of decisions or  conflicts.%
\footnote{In principle, the CDCL proof search should be completely
  oblivious to whether proof logging is being carried out or not, since no
  proof logging steps have any bearing on how the search algorithm is
  executed. However, in our implementation we use the variable
  handling interface in \minisat to manage the auxiliary variables introduced
  during proof logging. In more technical detail, the proof logging
  routines introduces fresh variables by adding them to the solver and
  marking them as non-decision variables.
  The mere existence of these additional variables seems to cause a
  slight change in the search. The difference 
  can only be observed when
  variables were added before preprocessing.
  With hindsight, it would most likely be better to let the proof
  logging code manage additional variables only used for the
  pseudo-Boolean derivations separately from the solver,
  However, our ambition was not to deliver a production-grade SAT
  solver with Gaussian elimination, but to provide a competitive
  implementation that can serve as a basis for meaningful experiments.
}
For example, consider the instance
in the bottom right of \reffig{fig:crypto_proof_size} that requires a
proof of a few hundred MiB in 
\drat   %
but times out for pseudo-Boolean proof
logging. This instance is solved with 
$541,928$ conflicts 
with no proof
logging
or   %
\drat   %
proof logging, but requires more than $19$~million conflicts
for pseudo-Boolean proof logging. 
It should be emphasized, however, that this difference is completely
irrelevant in that it is not in any way related to pseudo-Boolean
proof logging per se, but only to a peculiar choice in the
implementation we used for our experiments, as explained in the footnote above.

In \reffig{fig:crypto_verification} we
can see the time required for solving a benchmark versus verifying the
result. In practice, it would be sufficient to only verify the final solution
for satisfiable instances. However, as there are few solved unsatisfiable
instances, we verified that every constraint derived by the solver is correct
even for satisfiable 
ones.
For most 
formulas
the verification overhead
is roughly a factor~$10$, but there are also 
cases
where verification is
much slower, which demonstrates that further 
optimizations in the \veripb proof verification code would be desirable.
We are aware of several possibilities for this, such as improving
RUP checks and borrowing ideas like backward trimming of proofs from
\jncitenameurlpunctuation{\drattrim}{DRATtrim}{https://github.com/marijnheule/drat-trim}{.}
However, the main focus of this work was not on 
such   %
engineering questions, but rather on developing new mathematical
methods for efficient proof logging, and we would argue that the
potential for vast improvements in proof logging efficiency should be
clear from the experimental results reported in this section.

\begin{figure}[t]
\centering
        \includegraphics[scale=0.8]{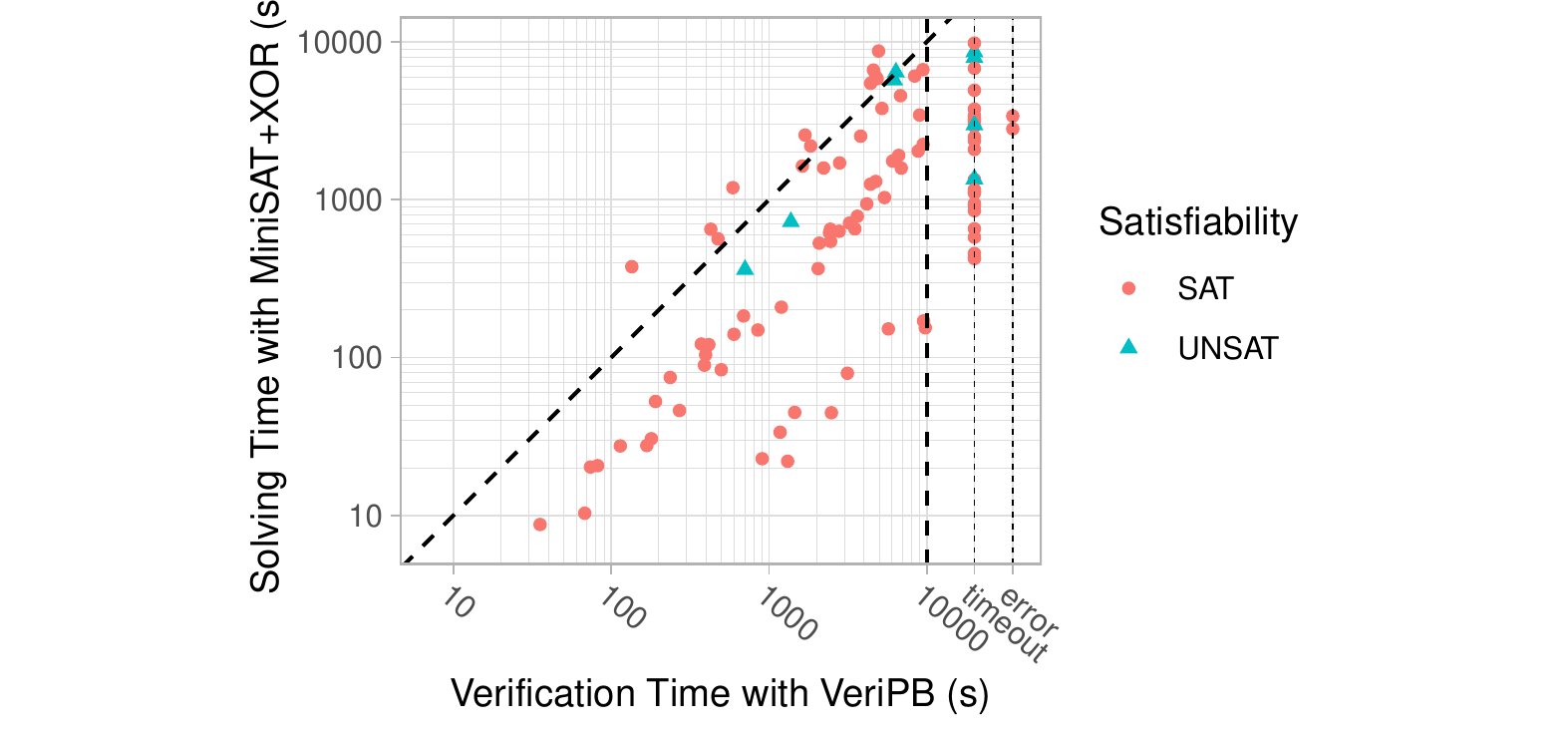}
        \caption{Time required for solving and verifying instances.}
        \label{fig:crypto_verification}
\end{figure}

\section{Conclusion}
\label{sec:conclusion}

In this work, we present 
an efficient proof logging method for
conflict-driven clause learning (CDCL) solvers
equipped with parity reasoning, 
which has been a long-standing challenge in SAT solving.  
Our 
approach
circumvents
the prohibitive overhead of 
previous
\drat-based proof logging methods for 
{parity reasoning such as the one developed \jnciteinorby{PR16DRATforXOR}}
by instead using the cutting planes method operating on pseudo-Boolean 
inequalities in the 
\jncitenameurl{\veripb tool}{VeriPB}{} and adding a rule for
introducing extension variables.
An experimental evaluation shows that this makes the proof logging
overhead, the size of the proof, and the time required for
verification all go down by an order of magnitude or more compared to
\drat.
While there is certainly ample room for further improvements, our
first proof-of-concept implementation already shows the power of this
approach. 

In terms of 
weaknesses,
one significant disadvantage
of our method
is that
the proof verification time is still considerably larger than the time
required for solving with proof logging, especially if many XOR
constraints are involved. There are at least two explanations for this.
One reason is that the algorithm for XOR reasoning can make use of bit-level
parallelism. The verifier cannot do so easily,
because it has to be able to deal with arbitrary linear constraints
and not just XORs. Another reason is that we introduce fresh variables
to encode the XOR constraints. 
On the solver side, these auxiliary variables can essentially be ignored,
except that they are printed in fairly standardized proof logging
templates, but they play
a crucial role in the calculations on the proof checker side
when  the  proof is verified.
It should be said, though, that although verification overhead is
larger than proof logging overhead, this is only by a constant factor.
In other words, if we are willing to pay a constant-factor increase in
running time, then this will allow us to not only use parity reasoning
but also obtain a formal proof establishing that this parity reasoning has
been performed correctly. It seems fair to argue that the benefits from fully
verified solutions could outweigh the disadvantage of this limited increase
in total execution time.

By construction, the pseudo-Boolean proof logging method in \veripb
can also be used to solve another
task that has remained very challenging for \drat, namely efficient
proof logging for cardinality detection and reasoning. We have not
investigated this in 
the current 
paper, since this is mostly an engineering
question
rather than a research problem
in view of 
the methods that have
already been developed 
\jnciteinorby{BLLM14Detecting,EN20CardinalImprovement}. 
Symmetry handling, a third notorious problem for proof logging,
appears to be much more difficult, but when it comes to adding
symmetry breaking constraints  our method can do at least as well
as~\jnciteA{HHW15SymmetryBreaking}, since it is a strict generalization of 
\drat.
In a later work~\cite{BGMN22Dominance} appearing after the conference
version of this paper, the \veripb proof logging system has been
extended further with a so-called 
\emph{dominance-based strengthening}, providing for the first time
efficient proof logging support for fully general symmetry breaking.
It is an interesting open question, however, whether this new
dominance rule is necessary, or whether the \redbasedstrengthening
rule introduced in the current work is sufficient to provide efficient
derivations of symmetry-breaking constraints.

The fact that no efficient proof logging support has previously been
available for enhanced SAT solving techniques such as parity
reasoning, cardinality detection, and symmetry handling means that SAT
solvers making crucial use of such techniques have not been able to
take part in the main track of the 
\jncitenameurlpunctuation{SAT competition}{SATcompetition}{}{,} 
where proof logging is mandatory.
Somewhat paradoxically, this seems to have the effect that the proof
logging requirements, which have played such an important role for the
development of the field, now risk becoming a barrier to further
solver developments. Since the \veripb tool can now support all of
parity reasoning, cardinality detection, and (as
of~\cite{BGMN22Dominance}) also symmetry breaking,
and does so with very limited overhead compared to \drat, it seems
natural to propose that this should be an allowed proof logging format
in future SAT competitions.

However, we believe that the potential benefit of pseudo-Boolean proof
logging with extension variables goes well beyond the context of the SAT
competitions. 
\veripb has been shown to be capable of efficient justification of
important constraint programming techniques~%
\cite{EGMN20Justifying,GMN22AuditableCP}, 
and can also provide proof logging for a wide range of graph problem
solvers~\cite{GMN20SubgraphIso,GMMNPT20CertifyingSolvers}.
Furthermore, 
\ifthenelse{\boolean{detectedJAIR}}
{\jnciteA{GMNO22CertifiedCNFencodingPB,VWB22QMaxSATpb}}
{the papers~\cite{GMNO22CertifiedCNFencodingPB,VWB22QMaxSATpb}}
have used \veripb to develop proof logging methods that seem to have
the potential to support a range of SAT-based optimization approaches
using maximum satisfiability (MaxSAT) solvers.
The pseudo-Boolean rules for reasoning with \mbox{$0$-$1$} linear
constraints provide a simple yet very expressive formalism, and it
does not seem out of the question to hope that they could be extended
to deal with proof logging for mixed integer programming (MIP).  Thus,
we believe that the ultimate goal of this line of research should be
to design a unified proof logging approach for as wide as possible a
range of combinatorial optimization paradigms.
In addition to furnishing efficient machine-verifiable proofs of
correctness, proof logging could also serve as a valuable tool for
debugging and empirical performance analysis during solver
development.  Furthermore, the proofs produced could in principle
provide auditability by third parties using independently developed
software, and/or be a stepping stone towards explainability by
showing, e.g., why certain solutions are optimal.
We view our paper as only one of the first steps on this long but
exciting road.

\section*{Acknowledgments}

We are grateful to Bart Bogaerts and Ciaran McCreesh for many
stimulating conversations on proof logging in general and \veripb in particular.
We also want to thank  Kuldeep~Meel and  Mate~Soos for helpful
discussions on how to implement Gaussian elimination modulo $2$.
A special thanks goes to Andy~Oertel for helping us track down mistakes 
in our worked-out example in \refsec{sec:example}.
Finally, we have benefited greatly from the interactions with and
feedback from many colleagues  taking part in the semester program
\emph{Satisfiability: Theory, Practice, and Beyond}
in the spring of 2021 at the 
Simons Institute for the Theory of Computing at UC Berkeley.

The authors were supported by the Swedish Research Council grant
\mbox{2016-00782}, and Jakob~Nordström also received funding from the
Independent Research Fund \mbox{Denmark} grant \mbox{9040-00389B}.

\bibliography{refArticles,refBooks,refOther,paper}

\bibliographystyle{alpha}

\end{document}